\documentclass[conference]{IEEEtran}
\IEEEoverridecommandlockouts

\usepackage{cite}
\usepackage{amsmath,amssymb,amsfonts}
\usepackage{amsthm}
\usepackage{algorithmic}
\usepackage[ruled,vlined]{algorithm2e}
\usepackage{graphicx}
\usepackage{textcomp}
\usepackage{xcolor}
\usepackage{xspace}
\usepackage {tikz}
\usetikzlibrary {positioning}
\usepackage{pgfplotstable}
\usepackage{pgfplots}
\usepackage{wrapfig}
\usepackage{hyperref}
\usepackage{array}

\usepackage{balance}
\usepackage{enumitem}

\usepackage{graphicx}

\newcommand{\att}{attitude\xspace}
\newcommand{\attup}{Attitude\xspace}
\newcommand{\attic}{AIC\xspace}
\newcommand{\attof}{\mathtt{Att}}
\newcommand{\attin}{\mathtt{AttIn}}
\newcommand{\sigatt}{\sigma_{\attof}}
\newcommand{\sigattinf}{\sigma_{\mathtt{Act}}}
\newcommand{\OPT}{\sigatt(S^*)}
\newcommand{\expectedDeg}{\expectation_{g\sim G}[deg(v)]}

\newcommand{\Eff}{Actionable Attitude\xspace}
\newcommand{\EffUpperCase}{ACTIONABLE ATTITUDE\xspace}
\newcommand{\ExpEff}{\sigma_{\mathtt{Act}}}

\newcommand{\expectation}{\mathbb{E}}

\newtheorem{definition}{\textsc{Definition}}
\newtheorem{theorem}{\textsc{Theorem}}
\newtheorem{lemma}{\textsc{Lemma}}

\newtheorem{problem}{\textsc{Problem}}

\makeatletter
\newcommand{\linebreakand}{%
  \end{@IEEEauthorhalign}
  \hfill\mbox{}\par
  \mbox{}\hfill\begin{@IEEEauthorhalign}
}
\makeatother

\begin{document}


\title{Measuring the Impact of Influence on Individuals: Roadmap
to Quantifying Attitude\thanks{Fu and Padmanabhan  are equal contributors. Research supported in part by NSF grants  1849053, 1934884}}

\author{
  \IEEEauthorblockN{Xiaoyun Fu}
  \IEEEauthorblockA{
    Department of Computer Science\\
    Iowa State University \\
    xfu@iastate.edu
  }
  \and
  \IEEEauthorblockN{Madhavan Padmanabhan}
  \IEEEauthorblockA{
    Department of Computer Science\\
    Iowa State University \\
    madhavrp@iastate.edu
  }
  \and
   \IEEEauthorblockN{Raj Gaurav Kumar}
  \IEEEauthorblockA{
    Department of Computer Science\\
    Iowa State University \\
    gaurav@iastate.edu
  }
  \linebreakand
  
  \IEEEauthorblockN{Samik Basu}
  \IEEEauthorblockA{
    Department of Computer Science\\
    Iowa State University \\
    sbasu@iastate.edu
  }
    \and
  \IEEEauthorblockN{Shawn Dorius}
  \IEEEauthorblockA{
    Department of Sociology\\
    Iowa State University \\
    sdorius@iastate.edu
  }
  \and
  \IEEEauthorblockN{A. Pavan}
  \IEEEauthorblockA{
    Department of  Computer Science\\
    Iowa State University \\
    pavan@iastate.edu
  }
}

\maketitle

\setlength{\textfloatsep}{5pt}
\begin{abstract}
Influence diffusion has been central to the study of the propagation of
information in social networks, where influence is typically modeled as
a binary property of entities: influenced or not influenced. We
introduce the notion of attitude, which, as described in social
psychology, is the degree by which an entity is influenced by the
information. We present an information diffusion model that quantifies
the degree of influence, i.e., attitude of individuals, in a social
network.  With this model, we formulate and study attitude
maximization problem.  We prove that the function for computing
attitude is monotonic and sub-modular, and the attitude maximization
problem is NP-Hard. We present a greedy algorithm for maximization
with an approximation guarantee of $(1-1/e)$.  Using the same model,
we also introduce the notion of ``actionable'' attitude with the aim
to study the scenarios where attaining individuals with high attitude
is objectively more important than maximizing the attitude of the
entire network. We show that the function for computing actionable
attitude, unlike that for computing attitude, is non-submodular and
however is \emph{approximately submodular}. We present approximation
algorithm for maximizing actionable attitude in a network. We
experimentally evaluated our algorithms and study empirical properties
of the attitude of nodes in network such as spatial and value
distribution of high attitude nodes. 
\end{abstract}


\section{Introduction}
\label{sec:intro}

The proliferation of social networks and their influence in modern
society led to a large body of research in several scientific domains
that focus on utilizing and explaining the significance of the impact
of social networks. One of the key problems investigated is to
understand the diffusion of information/influence propagation in
social networks.  Diffusion refers to the (probabilistic) behavior of
the interaction between the entities in the network describing
when/how an entity is influenced by the actions of its neighbors.

Seminal works of Domingos and Richardson, and Kempe {\it et al.}
proposed two popular models for information diffusion/influence
propagation---Independent Cascade and Linear
Threshold~\cite{domingos:kdd01,kempe:kdd03}. In these models, a node
of a network is said to be influenced if it receives the information
originated at the seed set.  This concept of influence is binary: an
entity is either influenced or is not influenced.  Real-world
experience shows that not all influenced individuals are the
same. I.e, some individuals are more \emph{strongly} influenced by
certain information compared to others. Thus, the {\em strength of
  influence} can vary from one individual to the other.  This
phenomenon has been pointed out in social sciences literature.
 
Within social psychology, two related concepts, attitudes and beliefs,
are frequently studied to understand human behavior. Beliefs, which
represent people's ideas about the way the world is or should be, are
commonly conceptualized as binary in nature, present or
absent\cite{FishbeinAjzen75}. Throughout their lives, people acquire new beliefs, and
sometimes, new beliefs replace old beliefs. In this way, people tend
to acquire a very large number of beliefs over the life course.  This
notion of {\em belief} in social psychology,  that is {\em binary} in nature, can be considered similar to the
notion of ``influence'' in computational social network analysis which is also {\em binary} in nature.

Attitudes, on the other hand, are ``latent predispositions to respond
or behave in particular ways toward attitude
objects''~\cite{AlwinScott96}. In contrast to beliefs, which are
largely cognitive in nature, attitudes, have a cognitive, affective,
and a behavioral component~\cite{Rokeach70}. Being subjective in
nature, attitudes can vary in strength
such that a person can hold a {\em very strong attitude or a weak attitude
toward} an object or concept, and thus attitude quantifies the strength of belief~\cite{Ajzen01, FishbeinAjzen75}. 
Individuals acquire attitudes through
experiences and exposure. In the case of exposure, a body of research
shows that {\em repeated exposure} to an object/idea increases the
likelihood that a person will adopt a more favorable attitude toward
it~\cite{Zajonc68}.  Thus {\em attitude} being non-binary can be thought of {\em
  strength of influence}. Motivated by these studies, we study the
problem of arriving at a mathematical model that captures the notion
of attitude resulting from information propagation in social networks.

Our first contribution is to define a mathematical model for measuring
attitude.  Within social networks, people are often subjected to {\em
  repeated exposures} to information such as an anti-vaccine message,
a pro-GMO message, or gun safety messaging.  It has been observed that
when an individual is exposed to a large number of, say, anti-vaccine
messages, this increases the probability that that person will adopt a
similar anti-vaccine attitude.  Based on this, we postulate that the
{\em strength of influence or attitude} of an individual, toward an
object/concept, can be captured by the number of times the
individual receives the information from its neighbors.  In other
words, if an already influenced individual is further provided with
the same/similar influencing information, then the latter reinforces
the learned belief of the individual, thus shaping and increasing
his/her {\em \att}. We use the number of reinforcements as a way to
quantify the \att.

Using this model, we define attitude of an individual and the total
attitude of the network as functions from $2^V$ to reals ($2^V$
denotes the power set of nodes $V$ of the network).  We denote the
function that captures the total attitude of the network with
$\sigma_{Att}(.)$ We study the computational complexity of the
function $\sigma_{Att}$ and provide efficient algorithms to
approximate it. We prove that this function is \#P-hard and 
 it is monotone and submodular.  We provide an $(\epsilon,
\delta)$-approximation algorithm for computing attitude with provable
guarantees.  We then formulate the {\em attitude maximization}
problem--find a seed set $S$ of size $k$ that will result in maximum
total attitude of the network. We first prove that the attitude
maximization problem is NP-hard.  Based on the monotonicity and
submodularity of \att, we propose a greedy algorithm that achieves a
$(1-1/e)$ approximation guarantee.

We further introduce the concept of \emph{actionable} \att. The
introduction of this concept is motivated by the fact that individuals
with higher attitude (strongly influenced) are likely to act according
to the attitude. This is particularly important in campaigns (such as
political or gun-safety messaging), where motivated and dedicated
volunteers are necessary to carry and spread the message (possibly
beyond the social network); and such volunteers are the ones who are
strongly influenced.  Our second major contribution is the study of the
underlying computational problem related to actionable attitude
maximization. We prove that though the function for computing
actionable attitude is not submodular, it is {\em approximately
  submodular}. Based on this we design efficient approximation
algorithms to maximize the actionable attitude in a network.

\section{Related Work} 
\label{sec:related}

Computational models of information diffusion in social networks is
introduced and formalized in the seminal works of Domingos and
Richardson~\cite{domingos:kdd01} and Kempe, Kleinberg and
Tardos~\cite{kempe:kdd03}. 
There are two widely-studied probabilistic
diffusion models: \emph{Independent Cascade} (IC) model and
\emph{Linear Threshold} (LT) model.  
Kempe {\it et
  al.}~\cite{kempe:kdd03} proved that the influence maximization
problem is NP-hard, and also proved that a greedy algorithm achieves a
$(1 - 1/e)$ approximation guarantee.  The approximation guarantee of
the greedy approach stems from the non-negativity, monotonicity and
submodularity of the influence function. 
  Since then several improvements have been
proposed to make the greedy algorithm more practical and scalable
~\cite{leskovec:kdd07,
  chen:kdd10,jung:icdm12,NTD16,Tang14,Tang15,
  Borgs14,GalhotraAroraRoy16}. 
Several variants of the influence maximization problem have been
studied in the literature, since the work of Kempe {\it et
  al.} such as topic-aware influence maximization and targeted influence maximization~\cite{LiZhangTan15,Chen:2015,li:privacy11,Barbieri:icdm12,
 Song:cikm16,Guo:cikm13,PSBP18,LuL12}.

Enhancements to the basic influence propagation model have been
proposed that take into account the opinions of
users~\cite{ZhangDinhThai13,GalhotraAroraRoy16,negative11}.
 Liu et al.~\cite{Liu:ICDM13,Liu:TKDD17} introduced PageRank based
diffusion model, as a generalization of the basic IC model.

These models do not capture the notion of
{\em attitude/strength of influence} that we seek to formalize.  
Aggarwal et al.~\cite{Aggarwal:SDM11} introduced a flow
  authority model to determine  assimilation of information in a
  network.  This model  differs from the Independent Cascade and does not capture the notion of attitude due to repeated activations. Consider a network where node 1 has a directed edge to node 2 and 3, and node 2 has a directed edge to node 3, and edge probabilities are 1.  Due to repeated activation, node 3 can receive information from nodes 1 and 2 and thus should have a higher attitude than nodes 1 and 2. However, in the flow-authority model all nodes will have equal probability of receiving ($p=1$) and does not distinguish node 3 from others whereas our proposed model will.  
 
In~\cite{Zhou:ICCS14}, the authors discussed the problem of maximizing
cumulative influence in a model where the same node can repeatedly
activate his/her neighbor within a given time interval. This is
realized by identifying a node to be newly activated in multiple
iterations of the diffusion process (even if the node, under
consideration may have been already activated). Such a model may lead
to divergence in the computation of objective function, and hence, the
computation is parameterized by a time interval.  This distinguishes our
model where only the newly activated nodes can alter the attitude of
his/her neighbor; which ensures the convergence of computation of our
objective function and allows the method to be step agnostic.


\section{Preliminaries}
\label{sec:background}


We describe the notation and definitions used frequently  in this paper.

\begin{definition}[Monotonicity \& Submodularity]
Let $V$ be a ground set and $f:2^V \rightarrow \mathbb{R}$ be a set function, where $2^V$ denotes the power set of $V$. We say that $f$ is {\em monotone} if $f(S) \leq f(T)$ when $S \subseteq T$. We say that $f$ is {\em submodular} if for every pair of sets $S$ and $T$ with $S \subseteq T$ and every $x \notin T$,
$f(S \cup \{x\}) - f(S) \geq f(T \cup \{x\} ) - f(T)$.
\end{definition}

We use $f(x|S)$ to denote the {\em marginal gain of $x$ with respect to $S$}, defined as $f(S \cup\{x\}) - f(S)$.

\begin{theorem}[Chernoff Bound]
Let $X_1,X_2...X_n$ be independent identically distributed random
variables taking value in the range $[0,1]$. $X=\sum_{i=1}^{n}X_i$. If
$\mu = \expectation[X]$, then for $\lambda \in (0,1)$, $P[|X-\mu| \geq
  \mu \lambda]\leq 2 exp(-\frac{\lambda^2}{2+\lambda}\cdot \mu)$
\end{theorem}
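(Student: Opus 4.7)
The plan is to prove the two-sided bound by splitting it into upper and lower tail bounds via the union bound, and then applying the standard moment generating function (MGF) technique of Chernoff. First I would write $P[|X-\mu| \geq \mu\lambda] \leq P[X \geq (1+\lambda)\mu] + P[X \leq (1-\lambda)\mu]$, so it suffices to bound each summand by $\exp(-\lambda^2 \mu / (2+\lambda))$.

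For the upper tail, I would apply Markov's inequality to $e^{tX}$ for a parameter $t>0$ to be optimized later, giving $P[X \geq (1+\lambda)\mu] \leq e^{-t(1+\lambda)\mu}\, \expectation[e^{tX}]$. To bound the MGF, I would use the convexity inequality $e^{tx} \leq 1 + x(e^t-1)$ valid for $x \in [0,1]$ (since $e^{tx}$ is convex in $x$ and the right-hand side is the secant line from $x=0$ to $x=1$). Taking expectations gives $\expectation[e^{tX_i}] \leq 1 + \expectation[X_i](e^t-1) \leq \exp(\expectation[X_i](e^t-1))$, and by independence $\expectation[e^{tX}] \leq \exp(\mu(e^t-1))$. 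Choosing $t = \ln(1+\lambda)$ (the optimizer) then yields the classical multiplicative bound $P[X \geq (1+\lambda)\mu] \leq \bigl(e^\lambda / (1+\lambda)^{1+\lambda}\bigr)^\mu$.

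The key analytic step, which I expect to be the main obstacle, is showing that for $\lambda \in (0,1)$,
\begin{equation*}
\frac{e^\lambda}{(1+\lambda)^{1+\lambda}} \;\leq\; \exp\!\left(-\frac{\lambda^2}{2+\lambda}\right).
\end{equation*}
Taking logarithms, this reduces to proving $\lambda - (1+\lambda)\ln(1+\lambda) + \lambda^2/(2+\lambda) \leq 0$. I would verify this by showing the left-hand side vanishes at $\lambda = 0$ and has nonpositive derivative on $(0,1)$, reducing to an elementary monotonicity check using the Taylor expansion of $\ln(1+\lambda)$.

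For the lower tail I would repeat the MGF argument with $t<0$, obtaining $P[X \leq (1-\lambda)\mu] \leq \bigl(e^{-\lambda}/(1-\lambda)^{1-\lambda}\bigr)^\mu \leq \exp(-\lambda^2 \mu / 2)$, which is strictly stronger than the required bound since $\lambda^2/2 \geq \lambda^2/(2+\lambda)$ for $\lambda > 0$. Combining the two tail bounds through the union bound completes the proof. Note that I am using the version of the bound that avoids optimizing separately over both tails; the symmetric form in the theorem statement follows by taking the weaker of the two (the upper-tail bound) and multiplying by $2$.
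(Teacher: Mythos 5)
The paper states this Chernoff bound as a standard known result in its Preliminaries and offers no proof of its own, so there is nothing to compare against; your proposal stands or falls on its own merits, and it is correct. It is the classical MGF argument: union bound over the two tails, Markov applied to $e^{tX}$, the secant-line bound $e^{tx}\leq 1+x(e^t-1)$ for $x\in[0,1]$, and the optimal choices $t=\ln(1\pm\lambda)$. The one step you flagged as the main obstacle does go through: with $f(\lambda)=\lambda-(1+\lambda)\ln(1+\lambda)+\lambda^2/(2+\lambda)$ one has $f(0)=0$ and
\begin{equation*}
f'(\lambda)=-\ln(1+\lambda)+\frac{\lambda(4+\lambda)}{(2+\lambda)^2}\leq -\ln(1+\lambda)+\frac{2\lambda}{2+\lambda}\leq 0,
\end{equation*}
using the elementary inequality $\ln(1+\lambda)\geq 2\lambda/(2+\lambda)$ (which itself reduces to $(2+\lambda)^2\geq 4(1+\lambda)$, i.e.\ $\lambda^2\geq 0$). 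Your lower-tail bound $\exp(-\lambda^2\mu/2)$ is indeed stronger than $\exp(-\lambda^2\mu/(2+\lambda))$, so taking the weaker exponent for both tails and multiplying by $2$ yields exactly the stated form. The proof also only uses independence, not identical distribution, so it establishes a slightly more general statement than the one in the paper.
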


 A social network is modeled as a weighted
directed graph $G = (V, E)$ with parameters $p: e \in E \to [0, 1]$,
where V and E ($|V | = n$ and $|E| = m$) denote the set of nodes and
edges, respectively.  The function $p(e = (u, v))$ is the probability of  node $u$
influencing/activating node $v$.  This denotes probability that the
information is successfully transferred from $u$ to $v$. We first
recall the standard Independent Cascade model of information
diffusion.

\begin{definition}{[\textbf{IC-Model}]}
Information spreads via a random discrete process that begins at a set
$S$ called seed set. Initially at step zero, all nodes in $S$ are
activated/influenced.  In each step, each newly activated node $u$
attempts to activate/influence its {\em inactivated} neighbor $v$ with
probability $p(u, v)$.  The diffusion process terminates when no new
nodes are influenced in a step.
\label{sec:ic-model}
\end{definition}

Given a set of nodes $S$, let $\sigma(S)$ be the {\em expected number
  of nodes} that are influenced at the end of the diffusion process
when the seed set is $S$. 

\smallskip
\noindent {\sc Influence Maximization Problem}. Given a social network
$G = (V, E)$, and an integer $k>0$, find a seed set $S \subseteq V$ of
size $k$ such that $\sigma(S)$ is maximized.

\smallskip
Kempe {\it et al.}~\cite{kempe:kdd03} proved that the influence maximization
problem is NP-hard and showed that the function $\sigma(.)$ is
monotone and submodular. Based on this, they designed a
$(1-1/e)$-approximation algorithm for the influence maximization
problem.

\section{Modeling \attup}
\label{sec:attitude}

In this section, we provide a mathematical model and definition to
capture the notion of {\em attitude}.  

\begin{definition}{[\textbf{\attup-IC model (AIC)}]}
  The diffusion proceeds in discrete rounds starting from some set of
  seed-nodes $S$.  Initially, all non-seed nodes have the \att $0$ and every seed node starts with an attitude value of $1$.
   At each
  step, each newly influenced node $u$ tries to send information to
  each of its neighbor $v$ as per the edge probability $p(u,v)$. If
  $u$ succeeds, then $v$'s \att is incremented by $1$; and its status
  is changed to influenced if it is not already influenced.  When $u$
  succeeds in sending information $v$, we say that the edge $\langle
  u, v\rangle$ is activated. The process terminates when no new nodes
  are influenced in a step.
  \label{def:att-ic}
\end{definition}

Consider Figure~\ref{fig:exinfmax} and let seed
set is $S= \{a\}$. At step $t=0$, the
attitude of $a$ is 1, $a$ tries to send information to $b,c$,
succeeding with probability $1$. At $t=1$, the attitudes of $a,b,c$ are
$1$.  The newly activated nodes $b,c$
send information to their neighbors. Node $b$ succeeds and increments the
attitude of nodes $a,c$. Simultaneously, $c$ succeeds and increments
the attitude of nodes $a,b$. At $t=2$, the attitudes of $a,b,c$ are $3,
2, 2$ respectively. Since no new nodes are activated in this step,
the diffusion ends.

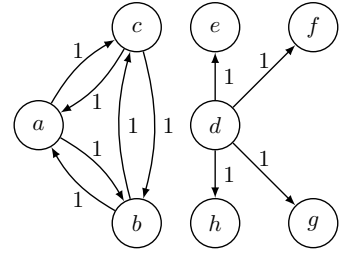
\begin{figure}[h]
\begin{center}
\resizebox{0.25\textwidth}{!}{
\begin {tikzpicture}[-latex,node distance =1.5cm and 1.5cm ,on grid ,
	semithick , scale=0.5,
	state/.style ={ circle ,top color = white , bottom color = white!10 ,
		draw,black , text= black , minimum width =0.75 cm}]
	\node[state] (B){$b$};
	\node[state] (A) [above left =of B] {$a$};
	\node[state] (C) [above right=of A] {$c$};
	\path (A) edge [bend left =15] node[above] {$1$} (C);
	\path (C) edge [bend left =15] node[below] {$1$} (A);
	\path (A) edge [bend left =15] node[above] {$1$} (B);
	\path (B) edge [bend left =15] node[below] {$1$} (A);
	\path (B) edge [bend left =15] node[right] {$1$} (C);
	\path (C) edge [bend left =15] node[right] {$1$} (B);
\end{tikzpicture}
\begin {tikzpicture}[-latex ,auto ,node distance =1.5cm and 1.5cm ,on grid ,
semithick ,
state/.style ={ circle ,top color = white , bottom color = white!10 ,
	draw,black , text= black , minimum width =.75 cm}]
\node[state] (D){$d$};
\node[state] (E) [above =of D] {$e$};
\node[state] (F) [above right =of D] {$f$};
\node[state] (G) [below right =of D] {$g$};
\node[state] (H) [below =of D] {$h$};
\path (D) edge [left =15] node[right] {$1$} (E);	
\path (D) edge [left =15] node[above] {$1$} (F);	
\path (D) edge [left =15] node[above] {$1$} (G);	
\path (D) edge [left =15] node[right] {$1$} (H);
\end{tikzpicture}
}
\end{center}
\caption{An example showing inf-max problem $\neq$ \att-max problem}
\label{fig:exinfmax}
\end{figure}

Note that, unlike in the standard information diffusion model, where each
activated node gets one chance to influence its \emph{un-influenced}
neighbors, in our model, each newly activated node tries to influence
all its neighbors irrespective of whether they are already influenced
or not.  Thus an activated node can receive information from a newly
activated node and this captures the notion of {\em repeated exposure}
or {\em reinforcement}, which, in turn, results in an increase of
 the recipient's attitude.


For any set $S \subseteq V$ of nodes, we use $\mathtt{Att}_v(S)$ to
denote the final attitude of node $v$ when the seed set is $S$.  Note
that this is a random variable and let $\expectation[\attof_v(S)],$
denote the expectation of $\mathtt{Att}_v(S)$. We define $\attin(S)$
as $\sum_{v \in V} \mathtt{Att}_v(S)$. The {\em total expected \att of the network}  resulting from
diffusion starting at seed $S$ is $\sigatt(S) =
\expectation[\attin(S)]$.  Observe that by l linearity of expectation,  $\sigatt(S) = \sum_{v \in V} \expectation[\attof_v(S)].$

By overloading notation, we often interpret $G$ as a distribution over
unweighted directed graphs, each edge $e=(u,v)$ is realized
independently with probability $p(u,v)$.  We write $g\sim G$ to denote
that an unweighted graph $g$ is drawn from this graph distribution
$G$.  Given a set of nodes $S\subseteq V$ and a graph $g$, we use
\begin{enumerate}
\item $R^S_g$ to denote the set of nodes reachable from
  $S$ in $g$.
  
\item $E^S_g = \{ e=(u,v) | u,v \in R^S_g$ and $e\in g \}$ is the set
  of \emph{activated edges} in $g$ due to diffusion from $S$. Let $E^S_{g, v}$ be the set of activated edges of the form $\langle ., v\rangle$.

\item $\attin_g(S)$ to denote the \att induced by $S$ in
  graph $g$ and is equal to $\sum_{v\in V} \attof_{g,v}(S)$,
  where $\attof_{g,v}(S)$ is the \att of $v$ in the graph $g$ computed as
  the number of activated incoming edges to $v$.
\end{enumerate}

We next prove a critical theorem that will be used in our subsequent
proofs.  Informally, this theorem states that the $\sigatt(S)$ is the
expected number of activated edges.

\medskip

\begin{theorem}\label{thm:actedge}
  If $g\sim G$ then for any $S\subseteq V$,
  $
    \sigatt(S)= |S| + \sum_{g\sim G} |E^S_g| \times Pr(g \sim G),
  $
  and
  $\expectation[\attof_v(S)] = \sum_{g\sim G} |E^S_{g,v}| \times Pr(g \sim G)$.
  \label{sec:sigatt-exp}
\end{theorem}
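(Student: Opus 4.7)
The plan is to prove the theorem by reinterpreting the random AIC diffusion via the standard possible-worlds (live-edge) view, adapted to our setting. In this view, rather than flipping the coin for edge $e = (u,v)$ lazily at the moment $u$ first becomes activated, we flip the coins for all edges up front and independently, obtaining a random unweighted subgraph $g \sim G$, and then simulate the diffusion deterministically on $g$. Under this reduction, $\attin(S)$ becomes a deterministic counting quantity on $g$ that decomposes neatly across seeds and edges, at which point taking expectation yields both identities.

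First I would establish the equivalence carefully. In the AIC process each edge $(u,v)$ is examined at most once, namely at the step when $u$ first becomes activated, and at that moment a single independent Bernoulli$(p(u,v))$ coin is flipped. By the principle of deferred decisions, the joint distribution over ``which edges fire'' is unchanged if we instead flip every edge's coin in advance. Let $g$ denote the subgraph of live edges produced by this advance flipping; by construction $g \sim G$. Since in AIC an activated node attempts to reach \emph{all} of its neighbors (influenced or not), a pre-flipped live edge $(u,v)$ ends up firing in the simulation if and only if $u$ eventually becomes activated, i.e., $u \in R^S_g$. Because firing an edge also activates its target, the set of activated nodes is exactly $R^S_g$ and the set of activated edges is exactly $E^S_g = \{(u,v) \in g : u,v \in R^S_g\}$, matching the definition in the excerpt.

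Next, conditional on $g$, the quantity $\attin(S)$ is a deterministic count. Each seed node contributes $1$ to the total attitude from the initial configuration, and each activated edge $(u,v) \in E^S_g$ increments $v$'s attitude by exactly $1$ during the diffusion; no other event changes anyone's attitude. Hence, conditional on $g$, $\attin(S) = |S| + |E^S_g|$. Taking expectation over $g \sim G$ and using linearity of expectation gives
\[
  \sigatt(S) \;=\; \expectation[\attin(S)] \;=\; |S| \,+\, \sum_{g \sim G} |E^S_g|\cdot \Pr(g \sim G),
\]
which is the first identity. The per-node identity follows by the same accounting, restricted to the incoming-edge contribution to $v$'s attitude in each realization $g$: since $\attof_{g,v}(S) = |E^S_{g,v}|$ by definition, linearity of expectation yields $\expectation[\attof_v(S)] = \sum_{g \sim G} |E^S_{g,v}| \cdot \Pr(g \sim G)$, with the initial $+1$ contribution of each seed absorbed into the $|S|$ term of the aggregate formula.

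The main obstacle I anticipate is justifying the live-edge reduction rigorously for AIC rather than IC. Unlike the classical Independent Cascade, AIC permits an activated node $u$ to attempt an edge into an already-influenced neighbor $v$, and one must verify that this does not break the one-coin-per-edge correspondence. The resolution is that each $u$ still attempts each of its outgoing edges \emph{exactly once} --- at the moment $u$ first becomes activated --- so precisely one independent Bernoulli coin is consumed per edge, and the deferred-decisions argument applies unchanged. Making this airtight, and in particular explaining why the order in which edges are examined does not affect the distribution of $E^S_g$, is where the proof needs the most explicit care.
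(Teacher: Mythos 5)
Your proposal is correct and follows essentially the same route as the paper: both arguments reduce to the observation that, conditional on a realization $g$ of the live edges, the attitude of a reachable node equals its number of activated incoming edges, so that $\attin_g(S) = |S| + |E^S_g|$, and then conclude by linearity of expectation over $g \sim G$. The only difference is presentational --- you spell out the deferred-decisions justification for identifying the AIC process with sampling $g \sim G$, a step the paper takes for granted by asserting that each $g$ ``corresponds to a particular diffusion process.''
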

\begin{proof} 

Recall that, a node $u$ contributes to the \att of
its neighbor $v$, if $u$ is influenced and it is successful in
``passing'' on the influence to $v$ (irrespective of whether $v$ is
already influenced or not) via the directed edge $\langle u, v\rangle$. We refer to such an
edge as an activated edge.

Let $g$ be a graph drawn as per the distribution. Note that $g$ corresponds to a particular diffusion process.
In $g$, if a node $v$ is not reachable from $S$, it means $v$ is not
activated in that diffusion process, and its incoming edges, if any,
are not activated. Thus the \att of such a node is $0$. On the other
hand, if a node $v$ is reachable from $S$ in $g$, it means $v$ is
activated in the diffusion process. If $x$ is the number of incoming edges to $v$ in $G$, this means that
$v$ received information through its neighbors $x$ times. Thus the attitude of $v$ is $x$ in this diffusion process.
Thus  node $v$'s  \att is the number of
\emph{activated} incoming edges of $v$. Let $N(v)$ denote the number
of activated incoming edges of $v$. Then $\attin_g(S)$ is equal to
\[
\begin{array}{r@{\ =\ }l}
  \displaystyle\sum_{v\in V}\attof_{g,v}(S) &
  \displaystyle\sum_{v \in R^S_g }\attof_{g,v}(S) + \sum_{v \notin R^S_g }\attof_{g,v}(S)
  \\[0.5em]
  &
  \displaystyle |S| + \sum_{v \in R_g^S }N(v) + 0 = |S| +  |E^S_g|
  \end{array}
\]
The term $|S|$ is due to the fact that every seed node starts with an
attitude value of $1$.  This leads to
\[
\begin{array}{r@{\ = \ } l}
  \sigatt(S) & \expectation[\attin(S)]\\ & \expectation_{g\sim
    G}[\attin_g(S)] = |S|+ \displaystyle\sum_{g\sim G} |E^S_g| \times
  Pr(g \sim G)
\end{array}
\]

The second equality stated in the theorem follows from similar arguments. Let $g$ be a graph drawn as per the distribution. Observe that $\attof_{g,v}(S) = |E^S_{g,v}|$. This leads to:
\[
\begin{array}{r@{\ = \ } l}
 \expectation[\attof_v(S)] & \sum_{g\sim G} \attof_{g,v}(S) \times Pr(g \sim G)\\
 & \sum_{g\sim G} |E^S_{g,v}| \times Pr(g \sim G)
\end{array}
\]
\end{proof}

\subsection{Properties of \attup }
\label{sec:att-prop}

 In this section, we investigate several properties of the function $\sigatt(.)$. We first show that the $\sigatt$ is monotone and submodular

\begin{theorem}
Under the \attic model, $\sigatt(.)$ is a monotone, non-decreasing function
function.
\end{theorem}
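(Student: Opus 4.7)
The plan is to leverage Theorem~\ref{thm:actedge}, which already provides a clean per-realization characterization of $\sigatt$: for any seed set $X$, $\sigatt(X) = |X| + \sum_{g \sim G} |E^X_g| \cdot Pr(g \sim G)$. Monotonicity in expectation will therefore follow if I can establish monotonicity pointwise in the random graph $g$, for every realization drawn from $G$.

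First, I would fix an arbitrary pair $S \subseteq T \subseteq V$ and an arbitrary realization $g \sim G$. The key observation is the monotonicity of graph reachability: any node reachable from some $s \in S$ in $g$ is certainly reachable from $T$ in $g$, since $s \in T$. Hence $R^S_g \subseteq R^T_g$. Because $E^X_g$ is defined as $\{(u,v) \in g \mid u,v \in R^X_g\}$, enlarging the reachable set can only enlarge the activated edge set, giving $E^S_g \subseteq E^T_g$ and in particular $|E^S_g| \leq |E^T_g|$.

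Combining this pointwise edge-count inequality with the obvious inequality $|S| \leq |T|$, I obtain, for every realization $g$, the bound $|S| + |E^S_g| \leq |T| + |E^T_g|$. Multiplying by $Pr(g \sim G)$ and summing over all realizations preserves the inequality, so by Theorem~\ref{thm:actedge},
\[
\sigatt(S) = |S| + \sum_{g \sim G} |E^S_g| \cdot Pr(g \sim G) \leq |T| + \sum_{g \sim G} |E^T_g| \cdot Pr(g \sim G) = \sigatt(T),
\]
which establishes monotonicity.

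I do not expect any real obstacle here: the heavy lifting was already done in Theorem~\ref{thm:actedge}, which converted the dynamic diffusion process into a static counting statement over random graph realizations. The only thing to be careful about is to remember the $|S|$ term coming from seed nodes starting with attitude $1$, and to verify that the activated-edge set is defined using reachability from the seed (not something more subtle like ``newly activated'' edges), so that the simple subset argument applies. Once that is noted, the argument reduces to the monotonicity of reachability in a fixed directed graph.
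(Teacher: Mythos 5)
Your proof is correct and follows essentially the same route as the paper's: fix a realization $g$, use $R^S_g \subseteq R^T_g$ to get $E^S_g \subseteq E^T_g$ pointwise, and conclude via the activated-edge characterization of $\sigatt$ from Theorem~\ref{thm:actedge}. If anything, you are slightly more careful than the paper in explicitly accounting for the $|S| \leq |T|$ seed term.
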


\begin{proof} 
Let $g\sim G$ and $S\subseteq T\subseteq V$.  We observe $R^S_g
\subseteq R^T_g$ since $S\subseteq T$. Thus, $E^S_g \subseteq E^T_g$
and $|E^S_g| \leq |E^T_g|$. Therefore, $\sigatt(S) \leq \sigatt(T)$.
\end{proof}

\begin{theorem}
Under the \attic model, $\sigatt(.)$ is a submodular function.
\end{theorem}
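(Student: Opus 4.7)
The plan is to reduce submodularity of $\sigatt$ to a per-realization statement using Theorem~\ref{thm:actedge}, and then prove submodularity deterministically on each realized graph $g$. By Theorem~\ref{thm:actedge}, $\sigatt(S) = |S| + \expectation_{g\sim G}[|E^S_g|]$. Since $S \mapsto |S|$ is modular and expectation preserves submodularity, it suffices to show that for every fixed graph $g$ in the support of $G$, the function $h_g(S) := |E^S_g|$ is submodular.

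To establish submodularity of $h_g$, I would fix $S \subseteq T \subseteq V$ and $x \notin T$, and analyze the marginal gains $h_g(S \cup \{x\}) - h_g(S)$ and $h_g(T \cup \{x\}) - h_g(T)$ directly in terms of reachability. The key observation is that in the deterministic graph $g$, $R^{S \cup \{x\}}_g = R^S_g \cup R^{\{x\}}_g$, and $E^S_g$ is exactly the set of edges of $g$ whose source lies in $R^S_g$ (the target is then automatically in $R^S_g$). Consequently, the newly activated edges when adding $x$ to $S$ are precisely the edges of $g$ whose source lies in $R^{\{x\}}_g \setminus R^S_g$, and similarly for $T$.

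The crux of the argument is then the set inclusion $R^{\{x\}}_g \setminus R^T_g \subseteq R^{\{x\}}_g \setminus R^S_g$, which follows from monotonicity of reachability ($R^S_g \subseteq R^T_g$, proved in the previous theorem). This implies that every edge newly activated by $x$ on top of $T$ is also newly activated by $x$ on top of $S$, giving $h_g(S \cup \{x\}) - h_g(S) \geq h_g(T \cup \{x\}) - h_g(T)$. Taking expectations over $g \sim G$ and adding back the modular term $|S|$ yields submodularity of $\sigatt$.

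The main obstacle I anticipate is not the inclusion argument itself, but making sure the counting of marginal edges is correct: one must be careful that an edge $(u,v)$ with $u \in R^{\{x\}}_g \setminus R^S_g$ indeed contributes exactly once to $h_g(S \cup \{x\}) - h_g(S)$ and that no other edges contribute. Once this bookkeeping is clear, the proof is a one-line set-inclusion argument followed by an application of linearity of expectation.
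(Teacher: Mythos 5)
Your proposal is correct and follows essentially the same route as the paper: reduce to a per-realization inequality via Theorem~\ref{thm:actedge}, then show that the marginal set of newly activated edges for $T$ is contained in that for $S$, using monotonicity of reachability ($R^S_g \subseteq R^T_g$). Your rephrasing of $E^S_g$ as the edges of $g$ whose source lies in $R^S_g$ is just a cleaner way of doing the bookkeeping the paper carries out directly on the edge sets.
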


\begin{proof}
Let $g\sim G$, $S\subseteq T\subseteq V$ and
$u\notin T$. Our objective is to prove that
\[
\begin{array}{l}
  \sigatt(S\cup\{u\}) - \sigatt(S) \\
   =   \displaystyle\sum_{g\sim G} (|E^{S \cup\{u\}}_g| - |E^S_g|) \times Pr(g\sim G) \\[.5em]
  \geq  \sigatt(T\cup \{u\}) - \sigatt(T) \\[.5em]
   =  \displaystyle\sum_{g\sim G} (|E^{T \cup\{u\}}_g| - |E^T_g|) \times Pr(g\sim G)
  \end{array}
\]
Since $Pr(g \sim G) \geq 0$, the proof obligation is
\[
\forall g\sim G\ \ |E^{S \cup \{u\}}_g| - |E^S_g| \geq |E^{T \cup \{u\}}_g| - |E^T_g|
\]

Observe that, 
\[
\begin{array}{l}
|E^{S \cup \{u\}}_g| - |E^{S}_g| = |E^{S \cup \{u\}}_g \setminus E^S_g|
\mbox{ and } \\
|E^{T \cup \{u\}}_g| - |E^{T}_g| = |E^{T \cup \{u\}}_g \setminus E^T_g|
\end{array}
\]
$R_g^S\subseteq R_g^T$ and 
$E^S_g\subseteq E^T_g$. 

For any $g\sim G$, if $e \in E^{T \cup \{u\}}_g \setminus E^T_g$ then
$e\notin E^T_g$ and $e\in E^{\{u\}}_g$. Since $E^S_g\subseteq E^T_g,
  e\notin E^S_g$. We know that $e\in E^{\{u\}}_g$ and thus $e\in E^{S \cup
    \{u\}}_g$. Therefore, $e \in E^{S \cup \{u\}}_g \setminus E^S_g$
  and thus $E^{T \cup \{u\}}_g \setminus E^T_g \subseteq E^{S \cup
    \{u\}}_g \setminus E^S_g$. This leads to $|E^{S \cup \{u\}}_g
  \setminus E^S_g| \geq |E^{T \cup \{u\}}_g \setminus E^T_g|$.
\end{proof}

%






The following result establishes the hardness of computing $\sigatt$.
\begin{theorem}\label{hard}
Under the \attic model, given $G=(V,E)$ and a seed $S\subseteq V$,
computing the values of the following is \#P-Hard: 1) $\sigatt(S)$, 2)
$\expectation[\attof_v(\cdot)], \forall v\in V$.
\end{theorem}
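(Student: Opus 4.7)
My plan is to establish both hardness claims by polynomial-time reductions from the \#P-hard problem of counting the edge-subgraphs of a directed graph in which a designated vertex $t$ is reachable from a designated vertex $s$, equivalently the directed $s$-$t$ reliability problem with edge survival probability $1/2$, shown \#P-hard by Valiant. Both reductions will use essentially the same gadget and will lean on the clean additive expression for $\sigatt$ provided by Theorem~\ref{thm:actedge}, which via linearity of expectation can be rewritten as
\[
  \sigatt(S)\;=\;|S|\;+\;\sum_{(u,v)\in E} p(u,v)\cdot\Pr[u\text{ is activated from }S].
\]

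For the \#P-hardness of $\expectation[\attof_v(\cdot)]$, given a directed graph $H=(V',E')$ with source $s$ and sink $t$, I would construct the AIC instance $G$ by taking $H$ with every edge probability set to $1/2$ and attaching a single fresh node $t^*$ with one incoming edge $(t,t^*)$ of probability $1$. Since $t^*$ has exactly one incoming edge, the second identity of Theorem~\ref{thm:actedge} gives $\expectation[\attof_{t^*}(\{s\})] = \Pr[(t,t^*)\text{ activated}] = \Pr[t\text{ reachable from }s\text{ in }H]$, which equals $N(H,s,t)/2^{|E'|}$, where $N(H,s,t)$ is the target count. A polynomial-time procedure for $\expectation[\attof_v(\cdot)]$ would therefore yield $N(H,s,t)$ in polynomial time.

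For the \#P-hardness of $\sigatt$, I would reuse the same gadget. Because $t^*$ has no outgoing edges, the joint distribution over the activation statuses of nodes in $V'$ seeded at $\{s\}$ is identical in $G$ and in $H$ alone, so the only additional contribution in $G$ relative to $H$ comes from the single auxiliary edge $(t,t^*)$. Applying the additive form above to each graph and subtracting yields $\sigatt_G(\{s\}) - \sigatt_H(\{s\}) = \Pr[t\text{ reachable from }s\text{ in }H]$. A polynomial-time algorithm for $\sigatt$ could then be invoked twice, on $G$ and on $H$, to recover this probability and hence solve the \#P-hard counting problem.

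The heart of the argument is already captured by Theorem~\ref{thm:actedge}, so I do not anticipate a significant conceptual obstacle. The step that requires the most care is arguing that attaching the sink-like vertex $t^*$ leaves the activation probabilities of nodes in $V'$ unchanged and that the contribution of $(t,t^*)$ to $\sigatt_G-\sigatt_H$ is exactly $\Pr[t\text{ reachable from }s]$; both facts follow immediately from the absence of outgoing edges at $t^*$ together with the actedge decomposition, so the remainder is routine bookkeeping.
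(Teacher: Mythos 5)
Your proof is correct, but it takes a genuinely different route from the paper's. The paper reduces from the \#P-hardness of computing the IC influence spread $\sigma(S)$ itself (citing Chen et al.): it attaches a fresh vertex $v_{new}$ with a probability-$1$ edge from \emph{every} node of $G$, so that $A(G',S)-A(G,S)=\sum_{v\in V}\Pr[S \text{ activates } v]=\sigma(S)$ and $\expectation[\attof_{v_{new}}(S)]=\sigma(S)$; hardness of $\sigatt$ and of the per-node expected attitude then follows black-box. You instead reduce directly from Valiant's $s$-$t$ connectedness counting (two-terminal reliability at $p=1/2$), attaching a sink $t^*$ with a single probability-$1$ edge from $t$ only, so that $\expectation[\attof_{t^*}(\{s\})]$ and $\sigatt_G(\{s\})-\sigatt_H(\{s\})$ both equal $\Pr[t \text{ reachable from } s]$. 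Your key identity $\sigatt(S)=|S|+\sum_{(u,v)\in E}p(u,v)\Pr[u\text{ activated}]$ does follow from Theorem~\ref{thm:actedge}: the event ``$u$ reachable from $S$ in $g$'' is witnessed by a simple path ending at $u$, hence is independent of the presence of the outgoing edge $(u,v)$, so the two events factor; it would be worth stating this independence explicitly since it is the one place the argument could silently go wrong. The trade-off: the paper's proof is shorter given the cited result and inherits hardness for whatever graph classes $\sigma$ is hard on, while yours is self-contained modulo Valiant and in effect inlines the Chen et al.\ reduction, making the source of hardness visible. Both are sound; your bookkeeping (the $|S|$ terms cancel, $t^*$ has no outgoing edges so activation probabilities in $V'$ are unchanged) checks out.
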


\begin{proof}
Let $\sigma(S)$ be the influence of $S$ under the IC
model. Computation of $\sigma(S)$ is known to be a \#P-Hard problem
~\cite{chen:kdd10}. Assume that there exists a function $A(G,S)$ that
computes $\sigatt(S)$. Let $a_1=A(G,S)$. Add a new vertex $v_{new}$ to
$G$. $\forall v\in V$, add an edge $(v, v_{new})$ and set
$p(v,v_{new})=1$. This results in graph $G'$. Let
$a_2=A(G',S)$. $a_2-a_1=\sum_{v\in V} P(S \texttt{ activates v}) =
\sigma(S)$. Therefore, $A$ can be used to compute
$\sigma(S)$. Similarly, let $A'(G,v)$ be a function that computes
$\expectation[\attof_v(S)]$. $A'(G',v_{new})$ will be able to compute
$\sigma(S)$ as $\expectation[\attof_{v_{new}}(S)]=\sigma(S)$. Similar arguments prove that computing $\expectation[\attof_v(\cdot)]$ is also \#P-hard.
\end{proof}

\subsection{\attup Computation}
\label{sec:att-rr}

From Theorem~\ref{hard}, it follows that computing $\sigatt(S)$
exactly is computationally infeasible.  In this section, we provide
efficient approximation algorithms to estimate $\sigatt(S)$.  Borgs
et. al.~\cite{Borgs14} introduced Reverse Influence Sampling (RIS),
which has been used to develop efficient Influence Maximization
algorithms~\cite{Tang14,Tang15,NTD16,Huang:2017}.  Using ideas from
these works, combining with Theorem~\ref{thm:actedge}, we introduce a
\emph{Reverse \attup Sampling} (RAS) technique.

Recall that $g$ denotes the un-weighted graph drawn from the random
graph distribution $G$. We write $g^T$ to denote the transpose of $g$.
The following lemma  and theorem establish the relationship between an edge
being activated by some nodes in any set $S \subseteq V$ and the
reachability of some node in $S$ from reverse of the same edges in
$g^T$; this relationship is key to the correctness of RAS technique.

\begin{lemma}
  Let $e=(x,y)$ be an arbitrary edge in $G$, $R^{\{x\}}_{g^{T}}$ be
  the set of nodes reachable from $x$ in $g^{T}$, where $g^T$ is the
  transpose of un-weighted graph $g$ drawn from random distribution
  $G$. Then for any $S\subseteq V$, $P[S \text{ activates } e \mbox{
      in } g] = P[S\cap R^{\{x\}}_{g^T}\neq \emptyset]$
\end{lemma}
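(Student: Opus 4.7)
The plan is to prove equality of the two probabilities by establishing an event-level equivalence that holds for every realization $g \sim G$, so the equality of probabilities follows by taking expectations.

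First, I would unpack the left-hand event. By the AIC-model (Definition~\ref{def:att-ic}) together with the live-edge interpretation of $g$ given just before Theorem~\ref{thm:actedge}, the edge $e=(x,y)$ is activated in $g$ by the diffusion from $S$ precisely when the source $x$ becomes influenced, i.e., $x \in R^S_g$. Note that $e$ is an outgoing edge of $x$, so it cannot appear on any simple $S$-to-$x$ walk; hence the reachability of $x$ from $S$ does not depend on the realization of $e$ itself, and the event can be stated purely as $x \in R^S_g$.

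Second, I would invoke the standard path-reversal duality:
\[
x \in R^S_g \iff S \cap R^{\{x\}}_{g^T} \neq \emptyset.
\]
For the forward direction, any $S$-to-$x$ directed path $u = v_0 \to v_1 \to \cdots \to v_k = x$ in $g$, with $u \in S$, reverses edge-by-edge into the directed path $x = v_k \to v_{k-1} \to \cdots \to v_0 = u$ in $g^T$, witnessing $u \in S \cap R^{\{x\}}_{g^T}$. The converse is symmetric, since $(g^T)^T = g$.

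Combining the two steps gives the equality of the events
$\{S \text{ activates } e \text{ in } g\}$ and $\{S \cap R^{\{x\}}_{g^T} \neq \emptyset\}$
for every $g$, and taking the probability over $g \sim G$ concludes. The main point to be careful about, rather than a deep obstacle, is the semantics of "activating $e$" so that it is cleanly captured by the reachability of its tail $x$; once that reduction is made, the path-reversal bijection is a one-line observation.
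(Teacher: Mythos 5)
Your proof is correct and follows essentially the same route as the paper's: both reduce the activation event to reachability of the tail $x$ from $S$ and then invoke the path-reversal correspondence between $g$ and $g^T$, using that edge probabilities are preserved under transposition. You are in fact more careful than the paper on one point---whether ``activates $e$'' additionally requires $e \in g$; your observation that $e$, being an outgoing edge of $x$, cannot affect $x$'s reachability from $S$ (so the two readings give the same conditional probability) is exactly what reconciles the lemma with the conditioning on $e\in g$ used later in Theorem~\ref{thm:attitudeRAS}, a subtlety the paper's two-sentence argument glosses over.
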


Both events, $S \text{ activates } e \mbox{ in } g$ and $S\cap
R^{\{x\}}_{g^T}\neq \emptyset$ requires drawing $g$ from $G$ such that
there exists a path between some node in $S$ and node $x$ (from $S$
to $x$ in $g$ and $x$ to $S$ in $g^T$). The probability of occurrence
of such events are identical, as the probabilities of edges in $g$ and
their reverse in $g^T$ are equal.

The following theorem relates the $\sigatt(S)$ to reverse attitude sampling.
\begin{theorem}\label{thm:attitudeRAS}
Given a graph $G = (V, E)$, for any $S\subseteq V$, and for any $v \in
V$, let $\expectation(\attof_v(S))$ denotes the expected \att of v
induced by $S$. Then,
$\expectation(\attof_v(S)) = |InDegree(v)|\times
P_{g\sim G, e = (u,v)\sim E}[S \cap R^{\{u\}}_{g^T}~|~e\in g]
$ and
$\sigatt(S) = |S| + 
  |E|\times P_{g\sim G, e = (x,y)\sim E}[S \cap R^{\{x\}}_{g^T}~|~e\in
    g]$

\end{theorem}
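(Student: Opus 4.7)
The plan is to start from the edge-counting identities established in Theorem~\ref{thm:actedge}, convert the resulting sums over activated edges into reverse-reachability probabilities using the preceding lemma, and then repackage each sum as $|E|$ (or $|\mathit{InDegree}(v)|$) times an expectation obtained by sampling an edge uniformly at random.

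For the per-node identity, Theorem~\ref{thm:actedge} together with linearity of expectation gives
\[
\expectation[\attof_v(S)]
  \;=\; \expectation_{g\sim G}\bigl[|E^S_{g,v}|\bigr]
  \;=\; \sum_{e=(u,v)\in E} P_{g\sim G}\bigl[\,e \text{ is activated by } S \text{ in } g\,\bigr],
\]
where the sum ranges over the $|\mathit{InDegree}(v)|$ in-edges of $v$. Applying the preceding lemma to each such edge rewrites every summand as $P_{g\sim G}[S\cap R^{\{u\}}_{g^T}\neq \emptyset]$. Interpreting this as $|\mathit{InDegree}(v)|$ times the average value of the probability under a uniform choice of $e=(u,v)$ from the in-edges of $v$ yields the first equality.

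The network-level identity is obtained analogously. Theorem~\ref{thm:actedge} gives $\sigatt(S) = |S| + \expectation_{g\sim G}[|E^S_g|]$, and decomposing the expected edge count by linearity produces
\[
\sigatt(S) \;=\; |S| \;+\; \sum_{e=(x,y)\in E} P_{g\sim G}\bigl[\,e \text{ is activated by } S \text{ in } g\,\bigr].
\]
Applying the lemma edge-by-edge and repackaging the $|E|$-term sum as $|E|$ times the probability under a uniform random draw of $e$ from $E$ (independently of $g$) delivers the second equality.

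The main point requiring care is the handling of the conditioning ``$\mid e\in g$'' that appears in the statement. I would argue this conditioning is inert: for an out-edge $e=(u,v)$ of $u$ in $g$, the reversed copy of $e$ in $g^T$ is an in-edge of $u$ in $g^T$, whereas reachability from $u$ in $g^T$ traverses only out-edges of $u$ in $g^T$. Hence the event $S\cap R^{\{u\}}_{g^T}\neq\emptyset$ is determined entirely by edges other than $e$, making it independent of the Bernoulli trial ``$e\in g$'', so conditioning on $e\in g$ leaves the probability unchanged. Everything else in the proof is a routine reshuffling of linearity of expectation and the definition of a uniform expectation over edges, so the reverse-reachability observation just described is the only nontrivial ingredient beyond Theorem~\ref{thm:actedge} and the lemma.
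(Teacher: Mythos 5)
Your overall architecture is the same as the paper's: invoke Theorem~\ref{thm:actedge}, expand $\expectation[|E^S_{g,v}|]$ and $\expectation[|E^S_g|]$ by linearity into per-edge activation probabilities, translate each to reverse reachability in $g^T$, and repackage the sum as $|E|$ (resp.\ $|\mathit{InDegree}(v)|$) times a probability under a uniform draw of an edge. The divergence --- and the gap --- is exactly at the step you single out as the only nontrivial ingredient: the treatment of ``$e\in g$.''

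By the paper's definition, $e=(u,v)$ is activated iff $u\in R^S_g$ \emph{and} $e\in g$, so each summand is $P_{g}\bigl[\,\exists w\in S.\ w\in R^{\{u\}}_{g^T}\ \land\ e\in g\,\bigr]=p(e)\cdot P_g\bigl[S\cap R^{\{u\}}_{g^T}\neq\emptyset\bigr]$. Your independence observation is correct (a path from $S$ to $u$ never needs the out-edge $(u,v)$ of $u$), but it only tells you that $P[A_e\mid e\in g]=P[A_e]$ for each fixed $e$; it does not let you drop the event $e\in g$ from the quantity being computed, because what Theorem~\ref{thm:actedge} requires is the \emph{joint} probability $P[A_e\land e\in g]=p(e)P[A_e]$, not $P[A_e]$. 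Your pipeline --- reading the lemma as giving $P[A_e]$ outright and then declaring the conditioning inert --- produces $\sum_e P[A_e]$, i.e.\ it counts an edge whenever its tail is reached even if the edge itself fails its Bernoulli trial; this overstates $\expectation[\attof_v(S)]$ unless every $p(e)=1$. (Treating the ``$\mid e\in g$'' as genuine conditioning with $e$ uniform does not rescue it either: that yields $|E|\sum_e p(e)P[A_e]/\sum_e p(e)$, again unequal to $\sum_e p(e)P[A_e]$ in general.) The paper's own proof avoids this by carrying the conjunction ``$\land\ e=(x,y)\in g$'' through every line and ending with $|E|\times P_{g,e}[\exists u\in S.\ u\in R^{\{x\}}_{g^T}\ \land\ e\in g]$, a joint event whose probability is $\tfrac{1}{|E|}\sum_e p(e)P[A_e]$; the ``$\mid$'' in the theorem statement is the paper's (loose) notation for that joint event, consistent with the RR-set generator, which adds the initial node only with probability $p(e)$. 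To repair your proof, keep the factor $p(e)$ (equivalently, the event $e\in g$) attached to each summand rather than arguing it away.
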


\begin{proof}
With respect to  a set $S$ and a node $v$, we will define the random variable
\[
X^{(u,v)}_g =
\left
\{
\begin{array}{ll}
1 & \mbox{if } (u, v) \in E^S_g \\
0 & \mbox{otherwise}
\end{array}
\right.
\]
Therefore,  by Theorem~\ref{thm:actedge}, it follows that 
\[\expectation(\attof_v(S))  = \displaystyle\sum_{ (u, v) \in E}\expectation_{g\sim G}[X^{(u,v)}_g]. \]
Note that, 
\[
\begin{split}
\expectation_{g\sim G}[X^{(u,v)}_g]
&=
P_{g\sim G}[\exists w\in S.\ u \in R^{\{w\}}_g \land (u, v) \in g]\\
& =
P_{g\sim G}[\exists w\in S.\  w\in R^{\{u\}}_{g^T}  \land  (u, v)\in g]
\end{split}
\]

By linearity of expectation, we have:
\[
\begin{split}
\expectation(\attof_v(S)) &= \sum_{(u, v) \in E} \expectation_{g\sim G}[X^{(u,v)}_g] \\
&= \sum_{(u, v) \in E}   P_{g\sim G}[\exists w\in S.\  w\in R^{\{u\}}_{g^T}  \land  (u, v)\in g] \\
&= |\mbox{InDegree}(v)|\!\times\! P_{g\sim G, e = (u,v)\sim E}[S \cap R^{\{u\}}_{g^T}|e\in
g]
\end{split}
\]




We present the proof of the second equality. With respect to  a set $S$, we will define the random variable
$X^e_g =1$ if $e \in E^S_g$, otherwise it is zero.
Therefore, by Theorem~\ref{thm:actedge}, we have  $\sigatt(S)=\expectation_{g\sim G}[\attin_g(S)] = |S| + \displaystyle\sum_{e\in E}\expectation_{g\sim G}[X^e_g]$. Note that, 
\[
\begin{split}
    \expectation_{g\sim G}[X^e_g]
  &=
    P_{g\sim G}[\exists u\in S.\ x \in R^{\{u\}}_g \land e = (x, y) \in g]\\
  & =
    P_{g\sim G}[\exists u\in S.\  u\in R^{\{x\}}_{g^T}  \land  e=(x, y)\in g]
\end{split}
\]

By linearity of expectation, we have:\\
\[
\begin{split}
\sigatt(S) &= |S| +  \sum_{e\in E}\expectation_{g\sim G}[X^e_g] \\
&= |S| + \sum_{e\in E}     P_{g\sim G}[\exists u\in S.\  u\in R^{\{x\}}_{g^T}  \land  e=(x, y)\in g] \\
&= |S| + |E|\!\times\! P_{g\sim G, e\in E}[\exists u\in S.\  u\in R^{\{x\}}_{g^T}  \land  e=(x, y)\in g]
\end{split}
\]
\end{proof}

The above properties pave way for the RAS technique. We proceed by
introducing \emph{Random Reverse Reachable Set} in the context of the
\attic model.  Given a graph $G = (V, E)$, we construct Random Reverse
Reachable Set ($RR$) of nodes in $V$ as follows. Consider the
transpose of $G$, $G^T = (V, E^T)$, where the probability annotation
for any edge in $E$ remains unchanged in the reverse of that edge in
$E^T$.

We now describe a procedure to generate  \emph{Random Reverse Reachable Sets (RR Sets)}:\\
\noindent{{\bf Generate RR Set.} Randomly pick an edge $e = (v, u) \in E^T$.
Then with probability $p(e)$, add the node $u$ to $RR$. For any $u$ is
added to $RR$, for each outgoing edge from $u$ in $G^T$, add the
destination with corresponding edge probability. The process continues
till no node is added to $RR$. 

From Theorem~\ref{thm:attitudeRAS}, we obtain the following lemma.

\begin{lemma}
$\sigatt(S) = |S| + |E|\times P_{RR\sim \mathcal{R}}[S\cap RR \neq \emptyset]$
\label{lem:attitude}
\end{lemma}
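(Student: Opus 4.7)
The plan is to unpack the \textbf{Generate RR Set} procedure and identify the distribution of $RR$ with the joint probability law produced in Theorem~\ref{thm:attitudeRAS}. The procedure picks an edge $e=(v,u)\in E^T$ uniformly, includes $u$ in $RR$ with probability $p(e)$, and, if $u$ was included, grows $RR$ by a reverse-BFS in which each outgoing edge in $G^T$ is independently realised with its own probability. This latter step samples exactly the reachable set $R^{\{u\}}_{g^T}$ under $g\sim G$; if $u$ is not added initially, the process halts with $RR=\emptyset$, so trivially $S\cap RR=\emptyset$.

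First, I would push the randomness of the choice of $e$ outside to write
\[
P_{RR\sim\mathcal{R}}[S\cap RR\neq\emptyset] \;=\; \frac{1}{|E|}\sum_{e=(u,v)\in E} p(e)\cdot P_{g\sim G}\bigl[S\cap R^{\{u\}}_{g^T}\neq\emptyset\bigr].
\]
Next, I would revisit Theorem~\ref{thm:attitudeRAS}, whose proof expands
\[
\sigatt(S)-|S| \;=\; \sum_{e=(x,y)\in E} P_{g\sim G}\bigl[S\cap R^{\{x\}}_{g^T}\neq\emptyset \,\wedge\, e\in g\bigr].
\]
Lining up the two sums term-by-term, the lemma reduces to verifying the factoring
\[
P_{g\sim G}[S\cap R^{\{x\}}_{g^T}\neq\emptyset \,\wedge\, e\in g] \;=\; p(e)\cdot P_{g\sim G}[S\cap R^{\{x\}}_{g^T}\neq\emptyset]
\]
for each $e=(x,y)\in E$; once this holds, the two expressions for $\sigatt(S)-|S|$ match after multiplying through by $|E|$.

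The one delicate step, and the one I expect to be the main obstacle, is this factoring. It is \emph{not} immediate from edge-independence of the $G$-distribution, because $R^{\{x\}}_{g^T}$ is itself a function of random edges of $g^T$. The key structural fact is that whether the edge $e^T=(y,x)$ is realised in $g^T$ does not affect $R^{\{x\}}_{g^T}$: a BFS rooted at $x$ only traverses $(y,x)$ after already reaching $y$ through some other path, and in that case the traversal of $(y,x)$ merely re-adds the already-present root $x$. Hence the indicator of $S\cap R^{\{x\}}_{g^T}\neq\emptyset$ depends only on edges of $g^T$ other than $(y,x)$, and edge-independence of the random graph distribution then gives the required factoring and closes the proof.
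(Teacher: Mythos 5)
Your proof is correct and follows the same route the paper intends: the paper states this lemma without proof as an immediate consequence of Theorem~\ref{thm:attitudeRAS}, and your derivation simply makes that deduction explicit by unpacking the \textbf{Generate RR Set} procedure and matching it term by term with the sum $\sum_{e=(x,y)\in E} P_{g\sim G}\bigl[S\cap R^{\{x\}}_{g^T}\neq\emptyset \wedge e\in g\bigr]$ appearing in that theorem's proof. The factoring step you single out --- that the realisation of $(y,x)$ in $g^T$ cannot affect reachability from its own head $x$, so the conjunction splits as $p(e)$ times the reachability probability --- is precisely the detail the paper leaves implicit, and your justification of it is sound.
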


\begin{algorithm}[t]
\scriptsize
{\small\sf}
\DontPrintSemicolon
\KwData{Graph $G=(V,E)$, $S\subseteq V$}
\Begin{
$\mathcal{R} = \mbox{Generate $\beta$ RR Sets using {\bf Generate RR Set}}$\;
$X= |\{RR\in \mathcal{R}~|~S\cap RR\neq \emptyset \}|$ \;

\KwRet {$\displaystyle\frac{|E| \cdot X}{\beta}$}\;
}
\caption{Estimate $\sigatt(S)$}
\label{algo:totalattitude}
\end{algorithm}

Lemma ~\ref{lem:attitude} allows us to design Algorithm
~\ref{algo:totalattitude} to estimate $\sigatt(S)$. In order to get a
good estimate, we will obtain a lower bound for $\beta$ in Algorithm
~\ref{algo:totalattitude}. Let $m=|E|$. Let $X_i$ be a random variable
that takes value $1$ if the $i$-th $RR$ Set contains an element of
$S$. Otherwise, $X_i=0$. Clearly each $X_i$ is independent and
$X=\sum_{i=1}^{\beta} X_i$. Note,
$\expectation[X]=\displaystyle\frac{\beta\sigatt(S)}{m}$

\[
\begin{array}{cl}
  & P[|\widehat{\sigatt}(S) - \sigatt(S)| \geq \epsilon \sigatt(S)] \\ 
= & P[|m \displaystyle\frac{X}{\beta} - \sigatt(S)| \geq \epsilon \sigatt(S)] \\
= & P[|X - \displaystyle\frac{\beta \sigatt(S)}{m}| \geq \epsilon \cdot \displaystyle\frac{\beta}{m} \sigatt(S)] \\
\leq & 2 exp(-\displaystyle\frac{\epsilon^2 \beta\sigatt(S)}{(2+\epsilon)m})
\end{array}
\]

The last inequality follows by applying Chernoff Bounds with $\lambda=\epsilon$. Let $\delta = 2 exp(-\displaystyle\frac{\epsilon^2 \beta\sigatt(S)}{(2+\epsilon)m})$. When $\beta \in \theta(\frac{m}{\epsilon^2\sigatt(S)}\cdot log(\frac{1}{\delta})$, Algorithm $~\ref{algo:totalattitude}$ estimates $\sigatt(S)$ within a relative error of $\epsilon$ with probability $1-\delta$.



%


\section{Attitude Maximization Problem}
\label{sec:attitude-maximization}

Having defined \attup under the \attic-model, a natural problem
arises: How do we find a set of users, who can influence the network
in a way that maximizes the \att of the network? We model this as
the \attup Maximization Problem:
\begin{problem}
\textsc{\attup Maximization Problem}: Given a graph $G=(V,E)$, a
number $k$, find $S\subseteq V$ of size at most $k$ such that
$\sigatt(S)$ is maximized.
\end{problem}

\begin{theorem}
  Under the \attic model, the \att maximization problem, i.e.,
  computing $\mathtt{argmax_{S\subseteq V,
      |S|\leq k}}\ \sigatt(S)$, is NP-hard.
  \label{thm:nphard}
\end{theorem}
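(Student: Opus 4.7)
The plan is a polynomial-time reduction from the classical Set Cover problem to the decision version of \attup Maximization, leveraging the closed-form expression for $\sigatt$ granted by Theorem~\ref{thm:actedge}. Given an instance $(U,\mathcal{S},k)$ of Set Cover with $U=\{u_1,\ldots,u_n\}$, $\mathcal{S}=\{S_1,\ldots,S_m\}$ (WLOG non-empty sets and $k<m$), I will construct a graph $G'$ together with a threshold $\tau$ so that $G'$ admits a seed of size $k$ with $\sigatt\geq\tau$ iff a cover of size $k$ exists.

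The graph has three strata, every edge of probability $1$: a \emph{set-node} $s_i$ for each $S_i$, an \emph{element-node} $u_j$ for each $u_j\in U$, and a block of $N$ fresh \emph{amplifier sinks} $t_j^1,\ldots,t_j^N$ attached to each element-node $u_j$, where $N:=k(n-1)+1$. Directed edges are $s_i\to u_j$ whenever $u_j\in S_i$, and $u_j\to t_j^\ell$ for every $\ell$. Because every edge fires deterministically, Theorem~\ref{thm:actedge} reduces $\sigatt(S)$ to $|S|$ plus the number of edges whose endpoints are both reachable from $S$.

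An intermediate \emph{swap lemma} then shows that an optimal seed of size $k$ can be taken to consist entirely of set-nodes. For any element-node $u_j$ (resp.\ sink $t_j^\ell$) appearing in a seed, I replace it by some set-node $s_i\notin S$: if $u_j$ is not already covered by the set-nodes of $S$, I choose $s_i$ to \emph{contain} $u_j$ (such an $s_i$ outside $S$ must exist, for otherwise $u_j$ would already be covered); if $u_j$ is already covered, any $s_i\notin S$ suffices (which exists because $k<m$). A direct case analysis of activated edges shows the swap weakly increases $\sigatt$ by injecting at least $|S_i|\geq 1$ new activated set-to-element edges and never losing coverage. With the seed restricted to set-nodes, counting activated edges directly yields
\[
  \sigatt(S)\;=\;k\;+\;\sum_{s_i\in S}|S_i|\;+\;N\cdot|C(S)|,
\]
where $C(S)=\bigcup_{s_i\in S}S_i$.

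Finally, set $\tau:=2k+Nn$. In the yes-direction, a cover of size $k$ yields a seed with $|C(S)|=n$ and $\sum|S_i|\geq k$, hence $\sigatt(S)\geq\tau$. In the no-direction, every size-$k$ seed has $|C(S)|\leq n-1$, so $\sigatt(S)\leq k+kn+N(n-1)=\tau-\bigl(N-k(n-1)\bigr)<\tau$ by choice of $N$. The main obstacle is that the naive bipartite reduction used for influence maximization would yield only $k+\sum|S_i|$, a quantity independent of coverage; the amplifier blocks of polynomial size $N=\Theta(kn)$ are precisely what elevate $|C(S)|$ to the dominant term and let the coverage structure be recovered from $\sigatt$. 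The remaining bookkeeping in the swap lemma and in the threshold comparison is routine.
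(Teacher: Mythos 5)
Your reduction is correct and reaches the same conclusion by a genuinely different route. The paper reduces from influence maximization on probability-$1$ bipartite graphs (NP-hard by Kempe et al.), attaching a single shared block of $2|E|$ sink nodes to every right-hand vertex so that the number of influenced right-hand vertices dominates $\sigatt$; you instead reduce directly from Set Cover, attach a private block of $N=k(n-1)+1$ sinks to each element node, and compare against an explicit threshold. The two gadgets serve the identical purpose --- amplifying coverage until it becomes the dominant term --- so the core idea is shared, but your version buys two things the paper's does not: it is self-contained (no appeal to the hardness of the restricted influence-maximization problem), and your swap lemma explicitly justifies restricting the seed to set-nodes, a step the paper silently assumes when it writes ``a set $S\subseteq X$'' even though the \att-maximizing seed in $G'$ could a priori contain nodes of $Y$ or $Z$. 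The price is more bookkeeping. One small point you must make explicit: in addition to ``non-empty sets and $k<m$,'' assume (WLOG, by trivial preprocessing) that every element belongs to at least one set. Your existence claim for the substitute $s_i\ni u_j$ relies on this, and without it an uncoverable element $u_j$ defeats the swap lemma --- placing $u_j$ itself in the seed earns its $N$ sink edges, which no set-node substitute can recover --- so that for $n\geq 3$ a seed containing such a $u_j$ can cross your threshold $\tau$ on a no-instance. With that one sentence added, the argument is complete.
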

\begin{proof} Our proof relies on reduction of influence
maximization problem (a known NP-Hard problem) to \att maximization
problem. 

We consider the influence maximization problem on directed Bi-partite
graphs (edges from left nodes to right nodes) with edge probabilities
1. That is, $G = (V, E)$, where $V = X \cup Y$, $X \cap Y = \emptyset,
E = \{(u, v) | u \in X, v \in Y\},$ and $\forall e \in E $ $p(e) =
1$. Kempe {\it et al.}~\cite{kempe:kdd03} proved that influence maximization
problem on such restricted class of graphs is also NP-hard.

We extend the bipartite graph $G$ to construct an instance $G' = (V',
E')$ for the attitude maximization problem, where $V' = V \cup Z, Z =
\{z_1, z_2, \ldots, z_{2|E|}\}$ and for each $y\in Y$, there
exists an edge to each $z\in Z$ with the edge probability $1$.

Suppose  that there is an algorithm for computing a set
$S\subseteq X$ of size $k$ that maximizes $\sigatt(S)$. If $L$ nodes
in the set $Y$ are influenced by $S$, then $\sigatt(S)\leq L\times
2|E| + |E|$.  (Each edge from an influenced node in $Y$ contributes to
the \att of each nodes in $Z$, and the overall \att of nodes in $Y$
can be at most $|E|$, the number of edges between $X$ and $Y$.)

Assume that $S$ does not induce maximum influence in $G$, i.e., there
exists some $S'\neq S$ for which $G$ is maximally influenced. In other
words, $S'$ influences at least $L+1$ nodes in $Y$. Therefore, if $S'$
is used as seed in $G'$, then it would have induced the overall \att
of nodes in $Z$ to be $(L+1)\times 2|E|$. This implies, $S'\neq S$ is
a set of size $|k|$ that maximizes $\sigatt(S')$ in $G'$, leading to a
contradiction.

Therefore, if any algorithm that  computes a set $S$ that maximizes \att in $G'$,
then $S$ must also maximize influence in $G$.
\end{proof}

Before we proceed to present an approximation algorithm for the
attitude maximization problem, we first prove that influence maximization
problem is different from the attitude maximization problem. In particular, we
prove that the optimal solution for the influence maximization problem
is not an optimal solution for the attitude maximization problem. Consider the from Figure~\ref{fig:exinfmax}. 
When $k=1$, the best seed set for the influence maximization is $\{d\}$ whereas the best seed set for the attitude maximization is any of $\{a\}, \{b\} $ or $\{c\}$. Thus,

\begin{theorem}
An optimal solution to the influence maximization problem is not an
optimal solution to the attitude maximization problem.
\end{theorem}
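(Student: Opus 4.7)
The plan is to formalize the counter-example hinted at in the paragraph preceding the theorem, using the graph of Figure~\ref{fig:exinfmax}. That graph has two disjoint components: a triangle on $\{a,b,c\}$ with bidirectional edges of probability $1$, and a star centered at $d$ with leaves $\{e,f,g,h\}$, again with edge probabilities $1$. Because every edge probability is $1$, the random graph $g\sim G$ is effectively deterministic, so both $\sigma(\cdot)$ and $\sigatt(\cdot)$ can be read off by inspection from the single realization. I set $k=1$ and compare the singleton optimizers of the two objectives.

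For the influence maximization side I would just enumerate reachability. Any singleton seed drawn from the triangle reaches the three triangle nodes, so $\sigma(\{a\}) = \sigma(\{b\}) = \sigma(\{c\}) = 3$; the seed $\{d\}$ reaches $\{d,e,f,g,h\}$, giving $\sigma(\{d\}) = 5$; and each leaf reaches only itself. Hence $\{d\}$ is the unique size-$1$ seed maximizing $\sigma$. For the attitude side I would invoke Theorem~\ref{thm:actedge}, which in the deterministic regime reduces $\sigatt(S)$ to $|S| + |E_g^S|$. For $\{a\}$ all six directed edges of the triangle become activated — the trace already carried out in Section~\ref{sec:attitude} gives node attitudes $3,2,2$, summing to $7 = 1+6$ — and by the symmetries of the triangle the same holds for $\{b\}$ and $\{c\}$. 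For $\{d\}$ only the four outgoing star edges are activated and nothing points back, so $\sigatt(\{d\}) = 1 + 4 = 5$; a leaf seed gives $\sigatt = 1$. Thus the attitude-maximizing singletons are exactly $\{a\}, \{b\}, \{c\}$, while $\{d\}$ is strictly sub-optimal for attitude.

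Combining the two computations contradicts the hypothesis that an influence-optimal seed is always attitude-optimal, which is the claim. There is no substantive obstacle here — the theorem is a separation statement and is settled by a single counter-example — so the only care needed is to justify the attitude counts, either by a direct AIC trace (as done for the triangle in Section~\ref{sec:attitude}) or, more cleanly, by reducing to an activated-edge count via Theorem~\ref{thm:actedge}.
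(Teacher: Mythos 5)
Your proposal is correct and is essentially the paper's own argument: the paper establishes the theorem by exactly the counter-example of Figure~\ref{fig:exinfmax} with $k=1$, observing that $\{d\}$ maximizes $\sigma$ while $\{a\}$, $\{b\}$, $\{c\}$ maximize $\sigatt$. You merely fill in the arithmetic (via the activated-edge count of Theorem~\ref{thm:actedge}) that the paper leaves implicit, and your numbers check out.
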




%
%

Nemhauser et. al. ~\cite{nemhauser} proved the greedy strategy to
maximize a non-decreasing, monotone, and submodular function outputs a
$(1-1/e)$-approximate solution. Recall that $\sigatt(\cdot)$ is in
fact a non-decreasing, monotone and submodular function. However, the
challenge lies in efficiently estimating $\sigatt(\cdot)$. Motivated
by this, we design a \textsc{RAS}-based approximation algorithm.

\begin{algorithm}[t]
\scriptsize
	{\small\sf}
	\DontPrintSemicolon
	\KwData{Graph $G=(V,E)$, $k$}
	\KwResult{Seed Set $S$ }
	\Begin{
		$\mathcal{R} = \mbox{Generate $\beta$ RR Sets using {\bf Generate RR Set}}$\;
		Mark all the sets in $\mathcal{R}$ as uncovered\;
		\While{$|S| \leq k$}{
		  Find $v$ that covers maximum uncovered sets in $\mathcal{R}$\;
		  Mark sets covered by $v$ as covered\;
		  Add $v$ to $S$\;
		}
		\KwRet {$S$}\;
	}
	\caption{$(1-1/e-\epsilon)$-approximate algorithm}
	\label{algo:greedyapprox}
\end{algorithm}

Algorithm~\ref{algo:greedyapprox} is our greedy algorithm for the attitude maximization problem. 
The algorithm works by generating $\beta$ random RR Sets. With the
goal now to find $S$ that covers the maximum RR Sets, the problem is
transformed to the Maximum Coverage problem. The greedy algorithm,
when applied to the Maximum Coverage problem, provides a
$(1-1/e)$-approximate solution. We have the following result on the approximation guarantee Algorithm~\ref{algo:greedyapprox}.
\begin{theorem}
When $\beta \in \theta (\frac{|E|(1+1\epsilon)}{\epsilon^2 \OPT} (log{n\choose k} - log(\delta) ))$, Algorithm~\ref{algo:greedyapprox} outputs a seed set $S_k$ such that
 \[\sigatt(S_k) \geq \left( 1 - \displaystyle\frac{1}{e} -\epsilon \right) \OPT\]
 with probability at least $1 -\delta$.
\end{theorem}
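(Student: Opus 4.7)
The plan is to follow the standard RIS-style sample-complexity recipe: reduce the problem to an empirical maximum coverage, use Chernoff to control the gap between the empirical and true objective, union-bound over candidate seed sets, and finally compose these with the greedy $(1-1/e)$ guarantee for max coverage.

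First, I would use the lemma $\sigatt(S) = |S| + |E| \cdot P_{RR\sim\mathcal{R}}[S\cap RR\neq\emptyset]$ to reinterpret the maximization. Since $|S|=k$ is fixed over all candidates of size $k$, maximizing $\sigatt$ is equivalent (up to the additive constant $k$) to maximizing the hitting probability of a random RR set. If we sample $\beta$ RR sets, then selecting $S$ to cover the most sampled RR sets is exactly the \emph{Maximum Coverage} problem, so Algorithm~\ref{algo:greedyapprox} (which is plain greedy on these $\beta$ sets) outputs $S_k$ with empirical coverage at least $(1-1/e)$ times the empirical optimum. I would define the estimator $\widehat{\sigatt}(S) = k + |E|\cdot X_S/\beta$, where $X_S$ counts sampled RR sets hit by $S$; by construction $\expectation[X_S] = \beta(\sigatt(S)-k)/|E|$.

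Next, I would apply the Chernoff bound stated in the preliminaries to the sum $X_S$ of $\beta$ i.i.d.\ $\{0,1\}$ random variables, exactly as was done in Section~\ref{sec:att-rr} for Algorithm~\ref{algo:totalattitude}. This yields, for each fixed $S$,
\[
P\bigl[\,|\widehat{\sigatt}(S) - \sigatt(S)| \geq \epsilon\,\sigatt(S)\,\bigr] \;\leq\; 2\exp\!\Bigl(-\tfrac{\epsilon^2\beta\,\sigatt(S)}{(2+\epsilon)|E|}\Bigr).
\]
To make this hold simultaneously for all $\binom{n}{k}$ seed sets of size $k$ (in particular for the true optimum $S^*$ and for whatever greedy picks), I would take a union bound. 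Plugging the prescribed $\beta \in \Theta\!\bigl(\tfrac{|E|(1+\epsilon)}{\epsilon^2\,\OPT}(\log\binom{n}{k}-\log\delta)\bigr)$ and using $\sigatt(S)\geq\OPT$ for candidates that matter, the total failure probability is at most $\delta$.

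Finally, I would combine the two ingredients. Conditioning on the good event, let $S^{\mathrm{emp}*}$ be the (hypothetical) empirical optimum; then
\[
\widehat{\sigatt}(S_k)-k \;\geq\; \bigl(1-\tfrac{1}{e}\bigr)\bigl(\widehat{\sigatt}(S^{\mathrm{emp}*})-k\bigr) \;\geq\; \bigl(1-\tfrac{1}{e}\bigr)\bigl(\widehat{\sigatt}(S^{*})-k\bigr),
\]
and two-sided concentration gives $\widehat{\sigatt}(S^*) \geq (1-\epsilon)\OPT$ and $\sigatt(S_k)\geq(1-\epsilon)\widehat{\sigatt}(S_k)$. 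Chaining these and absorbing the cross term $\epsilon(1-1/e)$ by rescaling $\epsilon$ by a constant (hidden in the $\Theta$), I obtain $\sigatt(S_k)\geq(1-1/e-\epsilon)\OPT$ with probability $\geq 1-\delta$.

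The main obstacle I anticipate is bookkeeping in the last step: the Chernoff error is multiplicative in $\sigatt(S)$, which forces the sample size to scale as $|E|/\OPT$ rather than $|E|/\sigatt(S)$, and the union bound over $\binom{n}{k}$ sets has to be paid for by the $\log\binom{n}{k}$ factor. Also, one has to be slightly careful that the concentration bound is applied meaningfully to sets $S$ with small $\sigatt(S)$ — in practice only a one-sided upper-tail bound is needed for those, and the chosen $\beta$ dominated by $\OPT$ suffices. Getting a clean $(1-1/e-\epsilon)$ (rather than $(1-1/e)(1-\epsilon)^2$) is just a matter of reparameterizing $\epsilon$ inside the $\Theta$.
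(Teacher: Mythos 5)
Your overall architecture matches the paper's: reduce to empirical maximum coverage over $\beta$ RR sets, concentrate the estimator via Chernoff, union-bound over the $\binom{n}{k}$ candidate seed sets, and chain with the $(1-1/e)$ greedy coverage guarantee. The gap is in the concentration step. You bound $P[|\widehat{\sigatt}(S)-\sigatt(S)|\geq \epsilon\,\sigatt(S)]$ by $2\exp\bigl(-\epsilon^2\beta\,\sigatt(S)/((2+\epsilon)|E|)\bigr)$, a deviation that is \emph{multiplicative in $\sigatt(S)$}, and then claim the union bound closes ``using $\sigatt(S)\geq\OPT$ for candidates that matter.'' That inequality is backwards: by definition $\sigatt(S)\leq\OPT$ for every size-$k$ candidate, with equality only at the optimum. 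For a candidate $S$ with small $\sigatt(S)$, the exponent $\epsilon^2\beta\,\sigatt(S)/((2+\epsilon)|E|)$ is small, and the prescribed $\beta=\Theta\bigl(|E|(\log\binom{n}{k}-\log\delta)/(\epsilon^2\,\OPT)\bigr)$ does not drive the per-set failure probability below $\delta/\binom{n}{k}$; you would need $\beta$ to scale like $|E|/\min_S\sigatt(S)$, which can be far larger. Since the greedy output $S_k$ is not known in advance and could a priori be any of the $\binom{n}{k}$ sets, you cannot restrict attention to sets with large $\sigatt$.

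The fix, which is what the paper does, is to measure the deviation \emph{additively} at scale $\epsilon\,\OPT/2$: apply the Chernoff bound with $\lambda=\epsilon\,\OPT/(2\sigatt(S))$, so the failure probability becomes $2\exp\bigl(-\epsilon^2(\OPT)^2\beta/(|E|(8\sigatt(S)+2\epsilon\,\OPT))\bigr)$, and now $\sigatt(S)\leq\OPT$ works \emph{for} you, yielding the uniform bound $2\exp\bigl(-\epsilon^2\,\OPT\,\beta/(|E|(8+2\epsilon))\bigr)$ for every candidate. With that in hand, the union bound and your final chaining ($\sigatt(S_k)\geq\widehat{\sigatt}(S_k)-\epsilon\,\OPT/2\geq(1-1/e)\widehat{\sigatt}(S^*)-\epsilon\,\OPT/2\geq(1-1/e-\epsilon)\OPT$) go through essentially as you outline. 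You do flag the issue in your closing paragraph and gesture at a one-sided patch for small-$\sigatt$ sets, but as written the central inequality and its justification are incorrect, and the additive-$\OPT$ reformulation is the missing ingredient rather than a bookkeeping detail. (A further technicality, shared with the paper: the stated Chernoff bound requires $\lambda\in(0,1)$, which $\lambda=\epsilon\,\OPT/(2\sigatt(S))$ can violate when $\sigatt(S)$ is small, so an unrestricted or one-sided form of the bound is needed there.)
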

\begin{proof}
We will prove that the algorithm produces a $(1-1/e-\epsilon)$-approximate solution with high probability.

First, we derive the bound for $\beta$ that is sufficient for
estimating $\sigatt(S)$ within a pre-specified error margin
$\epsilon$, in the context of computing the maximal overall \att.

Consider any $S\subseteq V$ of size $k$. Let $X$ be the cardinality of
$\{RR \in \mathcal{R}| RR\cap S\neq \phi\}$. $\widehat{\sigatt}(S) =
|E|\times \displaystyle\frac{X}{\beta}$ is a an estimate for $\sigatt(S)$.  Let
$\mu = \displaystyle\frac{\beta\cdot\sigatt(S)}{|E|}$ and $\OPT$ be the maximum
expected \att induced by any set of size $k$. 

\[
\begin{array}{cl}
 & P\left[|\widehat{\sigatt}(S) - \sigatt(S)|\geq \displaystyle\frac{\epsilon \OPT}{2}\right] \\
= & P\left[|E|\cdot\displaystyle\frac{X}{\beta} - \sigatt(S)|\geq \displaystyle\frac{\epsilon \OPT}{2}\right] \\
= & P\left[| \displaystyle\frac{X}{\beta} - \displaystyle\frac{\sigatt(S)}{|E|}|\geq \displaystyle\frac{\epsilon \OPT}{2|E|}\right] \\
= & P\left[| X - \mu|\geq \displaystyle\frac{\epsilon \OPT\cdot \beta}{2|E|}\right] \\
= & P\left[| X - \mu|\geq \displaystyle\frac{\epsilon \OPT\cdot \beta\sigatt(S)}{2\sigatt(S)|E|}\right] \\
\end{array}
\]

We apply Chernoff Bounds with $\lambda=\displaystyle\frac{\epsilon
 \OPT}{2\sigatt(S)}$,
\[
\begin{array}{cl}
 & P\left[|X - \mu|\geq \lambda \mu \right]
 < 2 exp\left(-\displaystyle\frac{\lambda^2}{2+\lambda} \mu \right) \\
  
 =\ & 2 exp\left(-\displaystyle\frac{\epsilon^2 (\OPT)^2}{(2+\lambda) \times 4(\sigatt(S))^2} \mu \right) \\
  =\ & 2 exp\left(-\displaystyle\frac{\epsilon^2 (\OPT)^2}{(2+\lambda) \times 4(\sigatt(S))^2} \displaystyle\frac{\beta\cdot\sigatt(S)}{|E|} \right) \\
  =\ & 2 exp\left(-\displaystyle\frac{\epsilon^2 (\OPT)^2}{(2+\lambda) \times 4\sigatt(S)} \displaystyle\frac{\beta}{|E|} \right) \\
 =\ & 2 exp\left(-\displaystyle\frac{\epsilon^2 (\OPT)^2}{|E|(8\sigatt(S)+2\epsilon\OPT)} \beta \right) \\
 \leq \ & 2 exp\left(-\displaystyle\frac{\epsilon^2 (\OPT)^2}{|E|(8\OPT+2\epsilon\OPT)} \beta \right) \\
 =\ & 2 exp\left(-\displaystyle\frac{\epsilon^2 \OPT}{|E|(8+2\epsilon)} \beta \right)
\end{array}
\]

The inequality follows from $\OPT \geq \sigatt(S)$. We would like the probability of this event to be at most
$\displaystyle\frac{\delta}{{n\choose k}}$. Proceeding further,

\[
\begin{array}{rcl}
2 exp\left(-\displaystyle\frac{\epsilon^2 \OPT}{|E|(8+2\epsilon)} \beta \right) &\leq& \displaystyle\frac{\delta}{{n\choose k}} \\
-\displaystyle\frac{\epsilon^2 \OPT}{|E|(8+2\epsilon)} \beta &\leq& log\left( \displaystyle\frac{\delta}{2{n\choose k}} \right) \\
\end{array}
\]

This implies that
\[\beta \geq \frac{|E|(8+2\epsilon)}{\epsilon^2 \OPT} \left[ log(2) + log{n\choose k} - log(\delta) \right]\]

\[
\begin{array}{rcl}
\beta & \geq & -\displaystyle\frac{|E|(8+2\epsilon)}{\epsilon^2 \OPT} log\left( \displaystyle\frac{\delta}{2{n\choose k}} \right) \\[1em]
& = & -\displaystyle\frac{|E|(8+2\epsilon)}{\epsilon^2 \OPT} \left[ log(\delta) - log(2) - log{n\choose k} \right] \\[1em]
& = & \displaystyle\frac{|E|(8+2\epsilon)}{\epsilon^2 \OPT} \left[ log(2) + log{n\choose k} - log(\delta) \right]
\end{array}
\]

Now that we have a lower bound for $\beta$, we can use the union bound
to show that this number of $RR$ sets is sufficient to ensure that
\emph{all} sets of size $k$ is within $\epsilon\cdot \OPT/2$ with
probability at least $1-\delta$. More precisely,
\begin{equation*}
P\left[\forall S, |S|=k, |\widehat{\sigatt}(S) - \sigatt(S)|\geq \displaystyle\frac{\epsilon \OPT}{2}\right] \leq \delta
\end{equation*}

Finally we relate the output of~\ref{algo:greedyapprox} with the optimal solution.
Let $S_{k}$ be the output of Algorithm ~\ref{algo:greedyapprox} and $S'$ the optimal solution to the coverage problem. Let $\Delta^*, \Delta', \Delta^k$ be the number of $RR$ sets covered by the $S^*, S', S_k$ respectively. With probability at least $1-\delta$,

\begin{IEEEeqnarray*}{lCl}
|\sigatt(S_k) - \widehat{\sigatt}(S_k)| &\leq & \displaystyle\frac{\epsilon \OPT}{2} \\
\sigatt(S_k) - \widehat{\sigatt}(S_k) &\geq & \displaystyle\frac{-\epsilon \OPT}{2} \\
\end{IEEEeqnarray*}
\begin{IEEEeqnarray*}{lCl}
\sigatt(S_k) &\geq & \widehat{\sigatt}(S_k)  - \displaystyle\frac{\epsilon \OPT}{2} \\
&  \geq & \displaystyle\frac{|E|}{\beta}\left( 1 - \displaystyle\frac{1}{e} \right) \Delta'  - \displaystyle\frac{\epsilon \OPT}{2} \\
& \geq & \displaystyle\frac{|E|}{\beta}\left( 1 - \displaystyle\frac{1}{e} \right) \Delta^*  - \displaystyle\frac{\epsilon \OPT}{2} \\
& \geq & \left( 1 - \displaystyle\frac{1}{e} \right) \widehat{\sigatt}(S^*)  - \displaystyle\frac{\epsilon \OPT}{2} \\
& \geq & \left( 1 - \frac{1}{e} \right) \left(1 - \displaystyle\frac{\epsilon}{2} \right)\OPT  - \displaystyle\frac{\epsilon \OPT}{2} \\
& = & \left( 1 - \displaystyle\frac{\epsilon}{2} - \displaystyle\frac{1}{e} + \displaystyle\frac{\epsilon}{2e} - \displaystyle\frac{\epsilon}{2} \right) \OPT\\
& \geq & \left( 1 - \displaystyle\frac{1}{e} -\epsilon \right) \OPT \\
\end{IEEEeqnarray*}

Thus, Algorithm ~\ref{algo:greedyapprox} outputs $\left( 1 - \displaystyle\frac{1}{e} -\epsilon \right)$-approximate solution with probability at least $1-\delta$.

\end{proof}

         



\section{ATTITUDE to \EffUpperCase}
\label{sec:attMinusInfluence}

As noted in
the introduction,  nodes with high influence are likely to act based on
their influence, and in some scenarios it is desirable to be able to spread information that results in such highly influenced
individuals. Motivated by this, we introduce a notion called {\em actionable attitude} that attempts to increase the total attitude of nodes with ``high enough attitude'', as opposed to the total attitude of all the nodes. For this, we need to understand and formulate the concept of high  enough attitude. Consider a network in which many nodes have an attitude value close to $2.5$ and a few nodes having an attitude more than $5$ (with respect to a certain seed set).  For this network,  a value of $5$ can be considered high, whereas for a network with most nodes having an attitude value of more than $7$, a value of $5$ is low. This suggests that the notion of high enough attitude is {\em relative} and depends on the structure of the network and the underlying influencing mechanisms. Thus, a way to formulate this notion is to incorporate the influence propagation. Consider a concrete instantiation of a diffusion process.
There are certain nodes that are barely influenced, they  receive the information once and thus their attitude is 1. However,  there exist certain nodes whose opinions have been reinforced due to multiple exposures. Comparatively these nodes can be thought of having higher attitude than the nodes that receive information only once. We refer to the attitude of these individuals in the network as \emph{actionable attitude}. Thus if the goal is to maximize this \emph{actionable attitude}, then we should discard the collective attitude of nodes that are barely influenced.  This leads us to the following definition.

\begin{definition}{[\textbf{\Eff}]}
We define \Eff induced by a given seed set $S$ as $\ExpEff(S) =
\sigatt(S) - \sigma(S)$.
\label{def:Effective Attitude}
\end{definition}

\begin{problem}
\label{prob:attMinusInf}
\textsc{\Eff Maximization Problem}: Given a graph $G=(V,E)$ and 
$k$, find $S\subseteq V$ of size at most $k$ such that $\ExpEff(S) $ is
maximized.
\end{problem}
We first show that the function $\ExpEff(\cdot)$ is a monotone function but not submodular.
\begin{theorem}
Under the \attic model, $\ExpEff(.)$ is a monotone, non-decreasing function
function
\end{theorem}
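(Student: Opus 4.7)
The plan is to reduce the claim, via linearity of expectation, to a pointwise (per realization) inequality over graphs $g\sim G$. Using Theorem~\ref{thm:actedge} together with the standard identity $\sigma(S)=\expectation_{g\sim G}[|R^S_g|]$ for the IC model, I can rewrite
\[
\ExpEff(S) \;=\; \sigatt(S)-\sigma(S) \;=\; |S| + \sum_{g\sim G} \bigl(|E^S_g| - |R^S_g|\bigr)\,\Pr(g\sim G).
\]
So it suffices to show that for each fixed realization $g$ and any $S\subseteq T$,
\[
|S| + |E^S_g| - |R^S_g| \;\leq\; |T| + |E^T_g| - |R^T_g|,
\]
or equivalently $|T\setminus S| + \bigl(|E^T_g|-|E^S_g|\bigr) \geq |R^T_g|-|R^S_g|$.

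The combinatorial heart of the argument is a careful accounting of the newly reached nodes $R^T_g\setminus R^S_g$. I would partition this set into (a) those that lie in $T\setminus S$, which are ``free'' in the sense that they are accounted for by the $|T\setminus S|$ term, and (b) those non-seed nodes that become reachable only through the larger seed set. For every node $v$ in group (b), pick any path in $g$ from a node of $T$ to $v$; the starting vertex must lie in $T\setminus S$ (otherwise $v\in R^S_g$), and walking along the path the last vertex $u$ with $u\in R^T_g\setminus R^S_g$ satisfies $(u,v)\in g$, so the edge $(u,v)$ belongs to $E^T_g\setminus E^S_g$. Collecting one such edge per $v$ in group (b) gives $|B|$ distinct edges (they have distinct targets), so $|E^T_g|-|E^S_g|\geq |B|$, while $|T\setminus S|\geq |A|$ trivially. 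Adding the two inequalities yields the required bound, and taking expectations finishes the proof.

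The main obstacle is precisely the per-graph inequality above; everything else is bookkeeping. The subtlety is that $\sigatt$ and $\sigma$ are each monotone, but subtracting monotone quantities need not preserve monotonicity, so one must exploit the tight structural relationship between reachable nodes and activated edges (every newly reached non-seed node is witnessed by at least one newly activated incoming edge) rather than manipulate $\sigatt$ and $\sigma$ separately. Care is also needed to ensure the witness edges chosen for distinct $v\in B$ are distinct, which is automatic because they have distinct endpoints $v$.
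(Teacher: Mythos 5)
Your proof is correct, and it follows the same overall route as the paper: both reduce $\ExpEff(T)-\ExpEff(S)$, via Theorem~\ref{thm:actedge} and linearity of expectation, to the per-realization inequality $|S|+|E^S_g|-|R^S_g| \leq |T|+|E^T_g|-|R^T_g|$, and both rest on the structural fact that every newly reached node is witnessed by a newly activated edge. Where you differ is in the bookkeeping, and your version is actually the more careful one. The paper argues via the subgraph $g'=(V',E')$ induced by $R^T_g\setminus R^S_g$ and asserts $|E'|\geq |V'|-1$; as stated this is false in general, since that induced subgraph can split into several components, one rooted at each element of $T\setminus S$ that opens up new territory (e.g.\ two isolated vertices of $T\setminus S$ give $|E'|=0$ but $|V'|=2$). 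The correct bound is $|E'|\geq |V'|-|T\setminus S|$, which is exactly what your partition into the set $A\subseteq T\setminus S$ and the set $B$ of edge-witnessed nodes delivers: the injection $v\mapsto (u,v)$ from $B$ into $E^T_g\setminus E^S_g$ (distinct because the heads are distinct, and not in $E^S_g$ because $v\notin R^S_g$) gives $|E^T_g|-|E^S_g|\geq |B|$, and adding $|T\setminus S|\geq |A|$ closes the argument for arbitrary $S\subseteq T$. So your proposal proves the same pointwise inequality the paper intends, but with an accounting that remains valid when $|T\setminus S|>1$; the paper's chain happens to reach the right conclusion only because the slack $|T|-|S|$ in its final step absorbs the error in its intermediate claim.
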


\begin{proof}
	Let $g\sim G$ and $S\subseteq T\subseteq V$.  We
	observe $|S| \leq |T|$ and $R_g(S) \subseteq R_g(T)$ since $S\subseteq T$. Thus, $E^S_g \subseteq E^T_g$ and $|E^S_g| \leq |E^T_g|$. For the subgraph $g' = (V' , E')$ induced by $R_g^T \backslash R_g^S$, $|E'| \geq |V'| - 1$ Therefore, $\ExpEff(S) = (|S| + |E^S_g| - R_g^T) \leq (|T| + |S| + |E^S_g| - R_g^T + |E'| - |V'|) = \ExpEff(T) $.  
	
Let $g\sim G$ and $S\subset T\subseteq V$.  We observe $|S| < |T|$ and
$R_g(S) \subseteq R_g(T)$ since $S\subset T$. Thus, $E^S_g \subseteq
E^T_g$ and $|E^S_g| \leq |E^T_g|$. For the subgraph $g' = (V' , E')$
induced by $R_g^T \backslash R_g^S$, $|E'| \geq |V'| - 1$ Therefore,
$\ExpEff(S) = (|S| + |E^S_g| - |R_g^S|) \leq (|S| + |E^S_g| - |R_g^S|
+ |E'| - |V'| + 1) \leq (|T| + |E^T_g| - |R_g^T|)= \ExpEff(T) $.
\end{proof}

\begin{theorem}
	Under the \attic model, $\ExpEff(.)$ is not submodular.
\end{theorem}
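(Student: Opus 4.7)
The plan is to exhibit a small counterexample and compute $\ExpEff$ on the four relevant sets directly. To refute submodularity it suffices to find $S\subseteq T$ and $x\notin T$ for which the marginal gain $\ExpEff(S\cup\{x\})-\ExpEff(S)$ is strictly less than $\ExpEff(T\cup\{x\})-\ExpEff(T)$. The intuition I will chase is that adding a seed $x$ to a small set $S$ mostly activates brand-new nodes (so each extra activation boosts both $\sigatt$ and $\sigma$ by $1$, and cancels in $\ExpEff=\sigatt-\sigma$), whereas adding the same $x$ to a larger $T$ that has already activated the same downstream region generates only repeated exposures on already-activated nodes (so $\sigatt$ grows but $\sigma$ barely does, yielding a large $\ExpEff$ gain).

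Concretely, I would take the graph $G=(V,E)$ with $V=\{a,b,c_1,\ldots,c_n\}$, edges $(a,c_i)$ and $(b,c_i)$ for every $i$, and all edge probabilities equal to $1$ (the graph is deterministic, so expectations reduce to a single realization). I will then pick $S=\emptyset$, $T=\{b\}$, and $x=a$, and use Theorem~\ref{thm:actedge} together with the definition $\ExpEff(S)=\sigatt(S)-\sigma(S)$ to compute:
\begin{itemize}
\item $\sigatt(\emptyset)=0$, $\sigma(\emptyset)=0$, hence $\ExpEff(\emptyset)=0$;
\item $\sigatt(\{a\})=1+n$, $\sigma(\{a\})=1+n$, hence $\ExpEff(\{a\})=0$;
\item $\sigatt(\{b\})=1+n$, $\sigma(\{b\})=1+n$, hence $\ExpEff(\{b\})=0$;
\item $\sigatt(\{a,b\})=2+2n$ (two seeds plus $2n$ activated edges, since each $c_i$ is the head of two activated edges), $\sigma(\{a,b\})=2+n$, hence $\ExpEff(\{a,b\})=n$.
\end{itemize}

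Plugging in, the marginal gain of $a$ over $S=\emptyset$ is $\ExpEff(\{a\})-\ExpEff(\emptyset)=0$, while the marginal gain of $a$ over $T=\{b\}$ is $\ExpEff(\{a,b\})-\ExpEff(\{b\})=n$. For any $n\geq 1$ this gives
\[
\ExpEff(S\cup\{a\})-\ExpEff(S)\;=\;0\;<\;n\;=\;\ExpEff(T\cup\{a\})-\ExpEff(T),
\]
which directly contradicts the submodularity inequality and completes the proof.

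There is no substantive obstacle here; the only care-points are bookkeeping. I must remember to include the additive $|S|$ term from Theorem~\ref{thm:actedge} when computing $\sigatt$, to count edges (not just activated targets) since in the AIC model a previously activated node still receives, and transmits, attitude increments, and to observe that because all probabilities are $1$ the random graph $g\sim G$ is a single deterministic graph, so the expectations collapse to the combinatorial counts above. Once those are straight, the single four-line calculation delivers the negative result.
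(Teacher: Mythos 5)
Your proposal is correct and follows essentially the same strategy as the paper's proof: a deterministic (all probabilities $1$) counterexample in which a target node receives incoming edges from two different sources, so that seeding both sources creates a repeated exposure that increases $\sigatt$ without increasing $\sigma$, yielding a strictly larger marginal gain over the larger set. The paper's concrete graph differs (it uses $S=\{s\}$ and a disjoint gadget rather than $S=\emptyset$ and a shared bipartite layer), but the mechanism and the four-value computation are the same, and your choice of $S=\emptyset$ is permitted by the paper's definition of submodularity.
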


\begin{proof} 
	Consider the following graph $G$ with each edge probability $1$.  Note
	that, there exists exactly one $g \sim G$, which is the graph itself.

\begin{figure}[h]
\begin{center}
	\begin{tikzpicture}[-latex ,auto ,node distance =1.5 cm and 1.5 cm ,on grid ,
	semithick ,
	state/.style ={ circle ,top color = white , bottom color = white!10 ,
		draw,black , text= black , minimum width =0.5 cm}]
	\node[state] (S){$s$};
	\node[state] (B) [right =of S] {$b$};
	\node[state] (A) [above =of B] {$a$};
	\path (S) edge [left =15] node[above] {$1$} (B);
	\path (S) edge [left =15] node[above] {$1$} (A);
	\path (B) edge [left =15] node[right] {$1$} (A);

\end{tikzpicture}
\begin{tikzpicture}[-latex ,auto ,node distance =1.5cm and 1.5cm ,on grid ,
semithick ,
state/.style ={ circle ,top color = white , bottom color = white!10 ,
	draw,black , text= black , minimum width =0.5 cm}]
\node[state] (C){$c$};
\node[state] (D) [above =of C] {$d$};
\node[state] (T) [left =of C] {$t$};
\node[state] (V) [right =of C] {$v$};
\path (T) edge [left =15] node[above] {$1$} (C);	
\path (V) edge [left =15] node[above] {$1$} (C);	
\path (C) edge [left =15] node[right] {$1$} (D);	
\end{tikzpicture}
\end{center}
\caption{An example demonstrating $\ExpEff(.)$ is not submodular}
\label{fig:effnonsubmodular}
\end{figure}
	
Let $S = \{s\}, T = \{s, t\}$. $S \subseteq T$ and $v \notin
T$. Observe that, $\ExpEff(S) = ( |\{s\}| + |\{(s, a), (s, b), (b,
a)\}|) - |\{s ,a ,b\}|= 4 - 3 = 1$ and $\ExpEff(T) = (|\{s, t\}| +
|\{(s, a), (s, b), \linebreak (b, a), (t, c), (c, d)\}|) - |\{s, a, b,
t, c, d\}| = 7 - 6 = 1$. Similarly, $\ExpEff(S \cup \{v\}) = ( |\{s,
v\}| + |\{(s, a), (s, b), (b, a), (v, c), (c, d)\}|) - |\{s, v, a ,b,
c, d\}|= 7 - 6 = 1$ and $\ExpEff(T\cup \{v\}) = (|\{s, t, v\}| +
|\{(s, a), (s, b),(b, a), (t, c), (c, d), (v, c)\}|) - |\{s, a, b, t,
c, d, v\}|= 9 - 7 = 2$. Therefore, $\ExpEff(v| S) = \ExpEff(S \cup
\{v\}) - \ExpEff(S) = 1- 1 = 0 $ and $\ExpEff(v| T) = \ExpEff(T \cup
\{v\}) - \ExpEff(T) = 2- 1 = 1 $. Since $\ExpEff(v| S) < \ExpEff(v|
T)$, $\ExpEff(.)$ is not submodular.
\end{proof}


Note that $\sigatt(\cdot)$ and $\sigma(\cdot)$ are very closely related as they rely on the same diffusion process. Using this we show that the actionable attitude function $\ExpEff(.)$ is {\em approximately submodular} ~\cite{JMLR:v9:krause08a}. 
\begin{definition}
A set function $f$ is {\em $\Delta$\mbox{-}approximate submodular} if for
every pair of sets $S$ and $T$ with $S \subseteq T$ and every $x
\notin T$, $f(x|S) \geq f(x|T) -
\Delta$.
\end{definition}

Note that for submodular functions $\Delta$ is zero.  We show that the
unction $\ExpEff(\cdot)$ is $\Delta$-approximate submodular, where
\emph{$\Delta$ is the expected maximum degree of the graph}, where
each edge $\langle u, v\rangle$ is kept with probability $p(u, v)$.

\begin{theorem}
Given a graph $G=(V,E)$ let $deg_G(v)$ denote the outdegree of any $v\in
V$. Then, $\forall S\subset T\subseteq V$ and $\forall v\notin T$,
$\sigattinf(v|S)\geq \sigattinf(v|T)- \expectedDeg$.
\end{theorem}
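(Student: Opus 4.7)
The plan is to analyze the difference realization-by-realization, leveraging the characterization $\attin_g(S) = |S| + |E_g^S|$ from Theorem~\ref{thm:actedge}, and then take expectations. Writing $f_g(S) := |S| + |E_g^S| - |R_g^S|$, we have $\sigattinf(S) = \expectation_{g\sim G}[f_g(S)]$, so it suffices to bound $f_g(v|T) - f_g(v|S)$ per realization and then take expectation.

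First, I would compute the per-realization marginal explicitly. For $v\notin S$, let $N_S^v := R_g^{\{v\}} \setminus R_g^S$ be the set of nodes that become newly reachable when $v$ is added to the seed; the newly activated edges are exactly the outgoing edges (in $g$) of nodes in $N_S^v$. A direct count gives
\[
f_g(v|S) = 1 + \sum_{x \in N_S^v} \deg_g^{out}(x) - |N_S^v|,
\]
and the analogous identity holds for $T$. Since $R_g^S \subseteq R_g^T$, we have $N_T^v \subseteq N_S^v$, so with $W_g := N_S^v \setminus N_T^v$ subtraction yields
\[
f_g(v|T) - f_g(v|S) = |W_g| - \sum_{x \in W_g} \deg_g^{out}(x).
\]

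Next I would bound this per-realization quantity. In the induced subgraph $g[W_g]$, every non-source node contributes at least one internal in-edge, so $\sum_{x \in W_g} \deg_g^{out}(x) \ge |W_g| - |\mathcal{S}_g|$, where $\mathcal{S}_g$ denotes the set of sources of $g[W_g]$. Therefore $f_g(v|T) - f_g(v|S) \le |\mathcal{S}_g|$. The final step is to charge each source of $g[W_g]$ to an out-edge of $v$ in $g$: a source is reached by a BFS branch from $v$ that traverses only $N_T^v$ before entering $W_g$, and the entry edge can be traced back through $N_T^v$ to a unique immediate out-neighbor of $v$. Taking expectation then yields $\expectation_{g\sim G}[f_g(v|T) - f_g(v|S)] \le \expectedDeg$, which is exactly the desired inequality.

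The main obstacle will be making the charge from $\mathcal{S}_g$ to out-edges of $v$ injective in each realization: different BFS branches into $W_g$ may emanate from the same immediate out-edge of $v$, and a naive count could overshoot $\deg_g(v)$. The proof will need either a sharper structural claim — exploiting that $W_g = R_g^{\{v\}} \cap (R_g^T \setminus R_g^S)$ forces distinct sources of $g[W_g]$ to arise from distinct first edges of $v$ — or a probabilistic averaging argument that uses the independent edge-realization structure of $g \sim G$ to collapse the expected number of sources down to $\expectedDeg$. Verifying this structural/probabilistic step is where the real combinatorial work lies.
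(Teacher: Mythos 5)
Your reduction is correct up to the step you flag, and it is in fact cleaner than the paper's own argument: the identity $f_g(v|S)=1+\sum_{x\in N_S^v}\deg_g^{out}(x)-|N_S^v|$, the resulting formula $f_g(v|T)-f_g(v|S)=|W_g|-\sum_{x\in W_g}\deg_g^{out}(x)$ with $W_g=R_g^{\{v\}}\cap(R_g^T\setminus R_g^S)$, and the bound by the number of sources $|\mathcal{S}_g|$ of $g[W_g]$ are all valid, and they expose exactly what the paper's three-case analysis computes. But the obstacle you identify at the end is not a technicality that more combinatorial work will close: the charge from $\mathcal{S}_g$ to out-edges of $v$ genuinely cannot be made injective, and the per-realization bound $f_g(v|T)-f_g(v|S)\le\deg_g(v)$ is false. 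Take $g$ deterministic with edges $(v,u),(u,x_1),(u,x_2),(t,x_1),(t,x_2)$, $S=\emptyset$, $T=\{t\}$. Then $W_g=\{x_1,x_2\}$, both nodes are sources with out-degree $0$, so $f_g(v|T)-f_g(v|S)=2$ while $\deg_g(v)=1$; checking directly, $\sigattinf(v|S)=0$ and $\sigattinf(v|T)=2$, so the theorem's inequality reads $0\ge 2-1$ and fails. Both branches out of the single edge $(v,u)$ land on distinct sources of $W_g$ --- precisely the non-injectivity you worried about.

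You should also know that the paper's proof does not repair this: in its Case 2 it asserts that the subgraph induced by $R_g^T\cap R_g^v\setminus(T\cup\{v\})$ satisfies $|E'|\ge|V'|-1$, which tacitly assumes that set induces a connected subgraph; in the example above that set is $\{x_1,x_2\}$ with no internal edges, so $|E'|=0<|V'|-1$. The defect is therefore in the statement, not in your strategy, and no charging scheme can rescue the bound $\expectedDeg$ (replacing it by the expected \emph{maximum} degree, as in Theorem~\ref{thm:almostSub}, does not help either: take $v\to u_1,u_2$, $u_i\to x_{2i-1},x_{2i}$, $t_i\to x_{2i-1},x_{2i}$, $T=\{t_1,t_2\}$, giving a gap of $4$ against a maximum degree of $2$). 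What your source-counting argument \emph{does} prove is $f_g(v|T)-f_g(v|S)\le|\mathcal{S}_g|$, and since distinct sources of $g[W_g]$ have distinct in-edges arriving from $R_g^{\{v\}}\setminus W_g$, this is at most the number of edges of $g$ crossing from $R_g^{\{v\}}\setminus W_g$ into $W_g$; a correct theorem of this type would need an additive error of that form (or some other quantity that grows with the reachable set of $v$, not just its immediate out-degree) on the right-hand side.
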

\begin{proof}
Let $f(v|S) = [(|E^{S \cup\{v\}}_g| + |S| + 1) - |R^{S \cup\{v\}}_g|]
- [(|E^{S}_g| + |S| - |R^{S}_g|]$.  Our objective is to prove that
$\sigattinf(v|T) - \sigattinf(v|S) = \displaystyle\sum_{g\sim G}
f(v|T) \times Pr(g\sim G) - \displaystyle\sum_{g\sim G} f(v|S) \times
Pr(g\sim G) \leq \displaystyle\sum_{g\sim G} \deg_g(v) \times Pr(g\sim
G)$
	
Since $Pr(g \sim G) \geq 0$, the proof obligation is 
\[ 
\forall g\sim G\ f(v|T) - f(v|S) \leq \deg_g(v)
\]
 We consider 3 cases. 

\noindent
{\em Case 1. }  $R_g^{v} \cap R_g^T =
 \emptyset$. In this case
 \[ f(v|S) = (|E_g^v| + 1) - |R_g^v| = f(v|T).\]
 Thus,
 $f(v|T) - f(v|S) = 0 \leq \deg_g(v).$\\
 
 \noindent{\em Case 2.}  $R_g^{v} \cap R_g^T
 \neq \emptyset, R_g^{v} \cap R_g^S = \emptyset$. In this case 
 \[f(v|S) = (|E_g^v|
 + 1) - |R_g^v|\]
 and 
 \begin{eqnarray*}
  f(v|T) & = & \{[|E_g^T| + |E_g^v| - |E_g^T \cap E_g^v|\\
  &  &  + (|T| + 1)] - [|R_g^T| + |R_g^v| - |R_g^T \cap R_g^v|]\} \\
 &  & -  [|E_g^T| + |T|- |R_g^T|] = (|E_g^v| + 1 - |R_g^v|)\\
 & &  + (|R_g^T \cap
 R_g^v| - |E_g^T \cap E_g^v|).
 \end{eqnarray*}
 
 For the subgraph $g' = (V', E')$
 induced by $R_g^T \cap R_g^v \backslash (T \cup \{v\}) $, $|E'| \geq
 |V'| - 1$. Thus $(|R_g^T \cap R_g^v| - |E_g^T \cap E_g^v|)$ reaches
 its maximum value $\deg_g(v)$ when $E_g^T \cap E_g^v =
 \emptyset$. Thus, $f(v|T) - f(v|S) \leq \deg_g(v).$\\ 
 
 \noindent{\em Case 3.} 
 $R_g^{v} \cap R_g^S \neq \emptyset$. In this case, 
 \begin{eqnarray*}
 f(v|S) & = &  (|E_g^v| + 1-
 |R_g^v|) + (|R_g^S \cap R_g^v| - |E_g^S \cap E_g^v|)\\
  f(v|T) &  = & 
 (|E_g^v| + 1- |R_g^v|) + (|R_g^T \cap R_g^v| - |E_g^T \cap
 E_g^v|)
\end{eqnarray*}
Therefore, 
\[
 f(v|T) - f(v|S) = |(R_g^T \backslash R_g^S) \cap R_g^v| -
 |(E_g^T \backslash E_g^S) \cap E_g^v|.
 \]
  For the subgraph $g' = (V',
 E')$ induced by $(R_g^T \backslash R_g^S) \cap R_g^v$, $|E'| \geq
 |V'| - 1$. Thus $|(R_g^T \backslash R_g^S) \cap R_g^v| - |(E_g^T
 \backslash E_g^S) \cap E_g^v|$ reaches its maximum value $\deg_g(v)$
 when $(E_g^T \backslash E_g^S) \cap E_g^v = \emptyset$.\\ Thus,
 $f(v|T) - f(v|S) \leq \deg_g(v).$\\
\end{proof}

This leads to following theorem.
\begin{theorem}\label{thm:almostSub}
The function $\ExpEff(\cdot)$ is $\Delta$-approximate submodular, where
$\Delta$ is the expected max degree of the graph.
\end{theorem}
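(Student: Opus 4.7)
The plan is to derive this theorem as an essentially immediate corollary of the previous theorem on marginal gains. The previous theorem establishes that for all $S \subset T \subseteq V$ and $v \notin T$, we have $\sigattinf(v\mid S) \geq \sigattinf(v\mid T) - \expectation_{g \sim G}[\deg(v)]$, where the bound depends on the expected outdegree of the particular vertex $v$ being added. The present theorem asks for a uniform bound $\Delta$ that does not depend on $v$, namely the expected max degree $\maxExpectedDeg$.

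So the first step is simply to quote the previous theorem verbatim, giving the per-vertex bound. The second step is the pointwise domination observation: for every realization $g \sim G$ and every vertex $v \in V$, $\deg_g(v) \leq \deg_{\max}(g)$ by definition of the max. Taking expectations over $g \sim G$ preserves the inequality, yielding $\expectation_{g \sim G}[\deg(v)] \leq \expectation_{g \sim G}[\deg_{\max}] = \Delta$. Chaining this with the previous theorem gives
\[
\sigattinf(v\mid S) \;\geq\; \sigattinf(v\mid T) - \expectation_{g \sim G}[\deg(v)] \;\geq\; \sigattinf(v\mid T) - \Delta,
\]
which is exactly the defining inequality of $\Delta$-approximate submodularity, with $\Delta = \maxExpectedDeg$ as required.

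There is no real obstacle here; the heavy lifting was already done in the previous theorem via the three-case analysis on whether $R_g^{v}$ intersects $R_g^S$ and $R_g^T$. The only subtlety worth flagging is that the uniform bound $\Delta$ (expected max degree) is necessarily at least as large as any per-vertex expected degree, so replacing the tighter per-vertex quantity with $\Delta$ only weakens the bound and the inequality still holds. I would keep the proof to two or three lines, explicitly invoking the previous theorem and the monotonicity of expectation.
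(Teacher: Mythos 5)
Your proposal is correct and matches the paper's own treatment: the paper gives no separate proof of this theorem, introducing it only with ``This leads to following theorem,'' i.e., it is treated as an immediate corollary of the preceding per-vertex marginal-gain bound, which is exactly the step you make explicit by observing $\expectation_{g\sim G}[\deg_g(v)] \leq \expectation_{g\sim G}[\deg_{\max}] = \Delta$ pointwise and chaining the inequalities.
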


Using this we first show that a greedy algorithm for actionable
attitude maximization problem gives a $(1-1/e)$ approximation
algorithm with an additive error of $\Delta$. The greedy algorithm
starts with an empty set $S_0$. During the iteration $i$, it picks an
element $v$ such that $\ExpEff(S_{i-1} \cup\{v\}) - \ExpEff(S_{i-1})$
is maximized.  Let $S^*$ is the optimal solution to the actionable
attitude maximization problem and let $S_k$ be the seed set produced
by the greedy algorithm
\begin{theorem}
$\ExpEff(S_k) \geq (1-1/e)\ExpEff(S^*) - (k-1)\Delta$.
\end{theorem}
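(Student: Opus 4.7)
The plan is to adapt the classical Nemhauser--Wolsey--Fisher greedy analysis for submodular maximization, but to carefully account for the slack introduced by $\Delta$-approximate submodularity so that the additive loss comes out to exactly $(k-1)\Delta$ per iteration rather than $k\Delta$. Throughout I will write $f = \ExpEff$ for brevity, and let $S^* = \{v_1^*, v_2^*, \ldots, v_k^*\}$ be the optimal seed set of size $k$, and $S_0 = \emptyset, S_1, \ldots, S_k$ the chain of greedy sets.

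The first step is to apply monotonicity (which we have established for $\ExpEff$) to get $f(S^*) \leq f(S^* \cup S_i)$, and then to telescope:
\[
f(S^* \cup S_i) - f(S_i) = \sum_{j=1}^{k} f\bigl(v_j^* \,\big|\, S_i \cup \{v_1^*,\ldots,v_{j-1}^*\}\bigr).
\]
The next step is the key one: I will apply Theorem~\ref{thm:almostSub} ($\Delta$-approximate submodularity) to each summand. Since $S_i \subseteq S_i \cup \{v_1^*,\ldots,v_{j-1}^*\}$, we have $f(v_j^* \mid S_i \cup \{v_1^*,\ldots,v_{j-1}^*\}) \leq f(v_j^* \mid S_i) + \Delta$ for $j \geq 2$, while for $j=1$ the inequality is an equality with no $\Delta$ loss. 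Summing yields
\[
f(S^* \cup S_i) - f(S_i) \;\leq\; \sum_{j=1}^{k} f(v_j^* \mid S_i) + (k-1)\Delta.
\]

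Now I would use the greedy choice: the element $v_{i+1}$ added at step $i+1$ maximizes the marginal gain with respect to $S_i$, so $f(v_j^* \mid S_i) \leq f(v_{i+1} \mid S_i) = f(S_{i+1}) - f(S_i)$ for every $j$. Combined with the monotonicity bound $f(S^*) \leq f(S^* \cup S_i)$, this gives
\[
f(S^*) - f(S_i) \;\leq\; k \bigl(f(S_{i+1}) - f(S_i)\bigr) + (k-1)\Delta.
\]
Setting $a_i := f(S^*) - f(S_i)$ and rearranging produces the recurrence
\[
a_{i+1} \;\leq\; \left(1 - \tfrac{1}{k}\right) a_i + \tfrac{k-1}{k}\Delta.
\]

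The final step is to unroll the recurrence. With $a_0 = f(S^*)$ and using $\sum_{j=0}^{k-1}(1-1/k)^j \leq k$ together with $(1-1/k)^k \leq 1/e$, I obtain
\[
a_k \;\leq\; \tfrac{1}{e} f(S^*) + \tfrac{k-1}{k}\Delta \cdot k \;=\; \tfrac{1}{e} f(S^*) + (k-1)\Delta,
\]
which rearranges to $f(S_k) \geq (1 - 1/e)\,f(S^*) - (k-1)\Delta$, as claimed.

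The main subtlety, and really the only thing that distinguishes this from the textbook greedy proof, is the bookkeeping in the telescoping step: one must notice that the first term $f(v_1^* \mid S_i)$ incurs no $\Delta$-cost because $\Delta$-approximate submodularity is only invoked when we pass from a smaller conditioning set to a strictly larger one. Missing this observation would give $k\Delta$ instead of the tighter $(k-1)\Delta$ stated in the theorem. Everything else is standard: monotonicity and the submodular-like inequality feed directly into the geometric-series argument.
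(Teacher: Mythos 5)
Your proof is correct and follows essentially the same route as the paper's: monotonicity, a telescoping expansion of $\sigattinf(S^*\mid S_i)$, the $\Delta$-approximate submodularity applied to all but the first term (yielding the $(k-1)\Delta$ rather than $k\Delta$), the greedy choice bound, and the standard recurrence unrolling. No substantive differences.
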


\begin{proof}
Let $S^*=\{e_1,e_2..,e_k\}$ be the optimum solution. 

\begin{align*}
   \sigattinf(S^*)\leq \sigattinf(S_i\cup S^*) = \sigattinf(S_i) + \sigattinf(S^*|S_i) \\
   = \sigattinf(S_i) + \sigattinf(e_1|S_i) + \sigattinf(e_2|S_i\cup \{e_1\}) + \\
    \sigattinf(\{e_3, e_4..e_k\}|S_i\cup \{e_1,e_2\}) \\
   \leq  \sigattinf(S_i) + \sigattinf(e_1|S_i) + \sigattinf(e_2|S_i) + \Delta + \\
   \sigattinf(\{e_3, e_4..e_k\}|S_i\cup \{e_1,e_2\}) \\
   \leq  \sigattinf(S_i) + \sum_{e\in S^*\setminus S_i}\sigattinf(e|S_i) + (k-1)\Delta \\
   \leq  \sigattinf(S_i) + k\sigattinf(S_{i+1}) - k\sigattinf(S_i) + (k-1)\Delta
\end{align*}

By subtracting $\sigattinf(S^*)$ on both sides,  rearranging terms, and solving the resulting recurrence we obtain

\begin{eqnarray*}
        \sigattinf(S_{i+1}) - \sigattinf(S^*) & \geq &\\
      (1-\frac{1}{k})(\sigattinf(S_{i}) - \sigattinf(S^*))  - (1-\frac{1}{k})\Delta\\
\end{eqnarray*}
Solving this recurrence, we get:

\begin{eqnarray*}
        \sigattinf(S_{k}) - \sigattinf(S^*)
    &\geq &(1-\frac{1}{k})^k (- \sigattinf(S^*)) \\
     & & 
- (k-1)\Delta \\
\end{eqnarray*}
hat $\sigattinf(S_{k})  \geq \left(1-\frac{1}{e}\right) \sigattinf(S^*)  - (k-1)\Delta$.
\end{proof}

\begin{algorithm}[t]
\scriptsize
	{\small\sf}
	\DontPrintSemicolon
	\KwData{Graph $G=(V,E)$, $S \subseteq V$, $k$}
	\Begin{
		\ForEach{$v \in V$}
		{$\mathcal{R} _v$ = Generate $a \times Indegree(v)$ RR graphs from v\\
	      	\ForEach{$g^T \in \mathcal{R} _v$}
		          {
		      	      $c_{g^T}^v(S) =$ the number of edges from v that reaches $S$ in $g^T$ - 1}}
	 
		\KwRet {$ \displaystyle \sum_{v \in V} \frac{\displaystyle \sum_{g^T \in \mathcal{R} _v} c_{g^T}^v(S)}{|\mathcal{R} _v|}  $}\;
	}
	\caption{Estimate $\ExpEff$}
	\label{algo:estimateEffectivAttitude}
\end{algorithm}

The greedy algorithm runs in polynomial time; however it is not
scalable.  As has been done for influence maximization~\cite{Borgs14}
and attitude maximization (Section~\ref{sec:attitude-maximization}),
we design a more efficient algorithm based on RR sets. However,
 the RR set based algorithms for those maximization problems do
not easily translate to the case of actionable attitude maximization.
The RR set based algorithm for influence maximization randomly picks a
vertex $v$ and generates a RR graph from $v$ whereas RR set based
algorithm for attitude maximization starts with picking an edge $e$
uniformly at random. For influence maximization problem it is critical
that each vertex is picked uniformly at random and for attitude
maximization, it is critical that each edge is picked uniformly at
random.  Note that randomly picking a vertex does not imply a random
choice of edge and vice versa. Since the function $\ExpEff(\cdot)$ is
the difference between attitude and influence, neither of these RR set
based methods can be translated for actionable attitude
maximization. 
We need a mechanism to generate RR sets using which we
can estimate both $\sigma$ and $\sigatt$.  Instead of randomly picking
a vertex or edge in the network, we generate a sufficient number of RR
graphs for each vertex $v$. 

 Let $F_g^S(v)$ be the number of edges from $v$ that reaches $S \in g^T$,
 $\mathcal{R} _v$ be the set of RR graphs from $v$, and $T_g^S(v)$ be
 the number of edges to $v$ that are reachable from $S \in g$.
 
\begin{theorem}\label{thm:actatt1}
	  Given a graph $G = (V, E)$, for any $S\subseteq V$. $\ExpEff(S) = \displaystyle \sum_{v \in V} \displaystyle \sum_{g^T \in \mathcal{R}_v} P(g) \times max \{F_g^S(v) - 1, 0\}$	
	\label{thm:estimateEffectivAttitude}
\end{theorem}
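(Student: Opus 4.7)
The plan is to reduce the identity to a per-node accounting argument by combining Theorem~\ref{thm:actedge} (which rewrites $\sigatt$ in terms of activated edges), the standard live-edge identity $\mathbb{E}_{g\sim G}[|R_g^S|] = \sigma(S)$, and the forward/RR duality already exploited in Theorem~\ref{thm:attitudeRAS}. First I would write
\[
\ExpEff(S) \;=\; \sigatt(S) - \sigma(S) \;=\; \mathbb{E}_{g\sim G}\bigl[\,|S| + |E_g^S| - |R_g^S|\,\bigr],
\]
applying Theorem~\ref{thm:actedge} to the first summand and the live-edge identity to the second.

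Next, for a fixed sample $g$, I would decompose $|S| + |E_g^S| - |R_g^S|$ as a sum over nodes $v\in V$. Writing $N(v) := |E_{g,v}^S|$ for the activated in-degree of $v$, a seed $v\in S$ contributes $N(v)$ to the integrand (the $+1$ from $|S|$ cancels the $+1$ from $|R_g^S|$ and the $|E_g^S|$ piece contributes $N(v)$); a non-seed $v\in R_g^S\setminus S$ contributes $N(v)-1$ with $N(v)\geq 1$; and a node $v\notin R_g^S$ contributes zero, because any $(u,v)\in g$ with $u\in R_g^S$ would force $v\in R_g^S$, hence $N(v)=0$. The last two cases are captured uniformly by $\max\{N(v)-1,0\}$.

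I would then invoke the duality used in Theorem~\ref{thm:attitudeRAS}: an edge $(u,v)\in E_g^S$ is equivalently an edge $(v,u)\in g^T$ whose head $u$ reaches $S$ in $g^T$, i.e., an edge from $v$ in $g^T$ that reaches $S$. Under the convention (consistent with an RR walk rooted at $v$) that when $v\in S$ the trivial length-zero path $v \rightsquigarrow v$ is also counted, we obtain $F_g^S(v) = N(v) + \mathbf{1}\{v\in S\}$, so that $\max\{F_g^S(v)-1,0\}$ equals $N(v)$ on seeds and $\max\{N(v)-1,0\}$ on non-seeds, uniformly matching the per-node contribution identified above. Taking expectation over $g\sim G$ and re-indexing that expectation as $\sum_{g^T\in\mathcal{R}_v}P(g)\cdot(\,\cdot\,)$ completes the proof.

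The main obstacle will be justifying the seed-case convention cleanly: the "$-1$" in $\max\{F_g^S(v)-1,0\}$ is meant to discount the single event that first marks $v$ as influenced; for a non-seed this event corresponds to one of the activated incoming edges (one edge in the RR graph that reaches $S$), but for a seed it is the initial unit attitude rather than any edge of $g$, so the definition of $F_g^S(v)$ must explicitly absorb that offset. Once this bookkeeping is locked down, the remainder is routine linearity-of-expectation manipulation together with the RR/forward correspondence.
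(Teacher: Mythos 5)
Your proposal is correct and follows essentially the same route as the paper's own proof: a per-node decomposition of $\sigatt(S)-\sigma(S)$ into $\max\{N(v)-1,0\}$ (with $N(v)$ the activated in-degree of $v$) for each realization $g$, followed by the forward/reverse duality to re-express this count via RR graphs rooted at $v$. If anything, you are more careful than the paper, which writes the per-node term as $\max\{T_g^S(v)-1,0\}$ without addressing the seed-node off-by-one that you correctly flag as the main bookkeeping obstacle.
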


\begin{proof}
With respect to  a set S, we will define the random variable
\[
Inf_v(S) =
	\left
	\{
	\begin{array}{ll}
	1 & \mbox{if } v \in R^S_g \\
	0 & \mbox{otherwise}
	\end{array}
	\right.
\]
Then,
\[
\begin{array}{r@{\ = }l}
\ExpEff(S) & \expectation \left[ \displaystyle \sum_{v \in V} Att_v(S)\right] 
		-  \expectation \left[ \displaystyle \sum_{v \in V} Inf_v(S)\right] \\[0.5em]
        & \displaystyle \sum_{v \in V} \expectation \left[ Att_v(S) - Inf_v(S) \right]\\[0.5em]
        & \displaystyle \sum_{v \in V}  \displaystyle \sum_{g \sim G} P(g) \times   \left[ Att_v(S) - Inf_v(S) \right]\\[0.5em]
        & \displaystyle \sum_{v \in V}  \displaystyle \sum_{g \sim G} P(g) \times max \{T_g^S(v) - 1, 0 \} \\[0.5em]
        & \displaystyle \sum_{v \in V}  \displaystyle \sum_{g \sim G} P(g) \times max \{F_{g^T}^S(v) - 1, 0 \}\\[0.5em]
         & \displaystyle \sum_{v \in V}  \displaystyle \sum_{g^T \in \mathcal{R}_v} P(g^T) \times max \{F_{g^T}^S(v) - 1, 0 \}
\end{array}
\]
\end{proof}

\begin{algorithm}[t]
\scriptsize
	{\small\sf}
\DontPrintSemicolon
\KwData{Graph $G=(V,E)$, $k$}
\KwResult{Seed Set $S$ }
	\Begin{
		\ForEach{$v \in V$}
		{$\mathcal{R} _v$ = Generate $a \times Indegree(v)$ RR graphs from v\\
			\ForEach{$g^T \in \mathcal{R} _v$}
			{\ForEach{$u \in g^T$}
				{$c_{g^T}^v(u) =$ the number of edges from v that reaches $u$ in $g^T$ - 1}}}
	    \ForEach{$u \in V$}
	    { $c(u) =  \displaystyle \sum_{v \in V} \frac{\displaystyle \sum_{g^T \in \mathcal{R} _v} c_{g^T}^v(u)}{|\mathcal{R} _v|}  $	}
		\While{$|S| \leq k$}{
			$v^* = arg \max\limits_{ u \in V \char`\\ S} c(u) $
			\;
		    $S = S \cup \{v^*\}$\;
		    \ForEach{$v \in V$}
		    	{\ForEach{$g^T \in \mathcal{R} _v$}
		    	 { Remove $v^*$ and all associated edges from $g^T$ \;
		    	 	\ForEach{$u \in g^T$}
		    		{compute $c_{g^T}^v(u)$\; }}}
		}
		\KwRet {$S$}\;
	}
	\caption{Find Best Seed Set for $\ExpEff (\cdot)$}
	\label{algo:bestSEffectivAttitude}
\end{algorithm}

\begin{theorem}\label{thm:actatt2}
Given a graph $G = (V, E)$, for any $S\subseteq V, u \in V$, the following
holds:  $\ExpEff(u | S)$ is equal to
\[
\begin{array}{l}
\displaystyle \sum_{v \in V}\!\displaystyle \sum_{g^T \in
  \mathcal{R}_v}\!\! P(g) \cdot \left[\mathtt{max} \{F_g^{S \cup \{u\}}(v) - 1,
  0\} \right.\\
\ \ \ \ \ \ \ \ \ \ \ \ \ \ \ \ \ \ \ \ \ \ \ \ \ \ \ \  \ \ \ \ \ \ \ 
\ \ \ \ \ \ \ 
\left.- \mathtt{max} \{F_g^S(v) - 1, 0\} \right]
\end{array}
\]
\label{thm:bestSeedUpdateEffectivAttitude}
\end{theorem}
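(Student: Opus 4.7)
The plan is to unfold the definition of the marginal gain and apply Theorem~\ref{thm:actatt1} (i.e., Theorem~\ref{thm:estimateEffectivAttitude}) to each of the two resulting terms. By the definition of marginal gain, $\ExpEff(u\mid S) = \ExpEff(S\cup\{u\}) - \ExpEff(S)$, so it suffices to write each of these in the RR-set form guaranteed by Theorem~\ref{thm:actatt1} and then subtract.

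More concretely, I would first apply Theorem~\ref{thm:actatt1} with seed set $S\cup\{u\}$ to obtain
\[
\ExpEff(S\cup\{u\}) = \sum_{v\in V}\sum_{g^T\in\mathcal{R}_v} P(g)\cdot \max\bigl\{F_g^{S\cup\{u\}}(v)-1,\, 0\bigr\},
\]
and then apply it again with seed set $S$ to get the analogous expression for $\ExpEff(S)$. The crucial observation is that the ranges of summation $v\in V$ and $g^T\in \mathcal{R}_v$, as well as the weights $P(g)$, are defined purely in terms of the graph $G$ and the root vertex $v$; they do not depend on the seed set being evaluated. Hence the two outer double-sums line up index by index, and linearity lets me pull the subtraction inside, producing exactly the claimed identity.

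The main thing I would need to be careful about — and really the only substantive point beyond invoking Theorem~\ref{thm:actatt1} — is the bookkeeping that ensures $F_g^{S\cup\{u\}}(v)$ and $F_g^S(v)$ are evaluated on the \emph{same} realization $g$ (equivalently, the same $g^T\in\mathcal{R}_v$), so that the per-term difference $\max\{F_g^{S\cup\{u\}}(v)-1,0\} - \max\{F_g^S(v)-1,0\}$ is well-defined and non-negative. This is immediate from the construction of $\mathcal{R}_v$: the collection of reverse reachable graphs rooted at $v$ is fixed a priori, and only the target set varies between the two quantities being subtracted. With this in place, the proof is a one-line application of linearity of the double sum, and no further estimation or bounding is required.
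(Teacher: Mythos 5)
Your proposal is correct and follows essentially the same route as the paper: the paper's proof also just expands $\ExpEff(u\mid S)=\ExpEff(S\cup\{u\})-\ExpEff(S)$ and pushes the subtraction through the same per-vertex, per-realization sums, merely inlining the chain of equalities from Theorem~\ref{thm:actatt1} rather than citing it twice as a black box. Your observation that the index sets $\mathcal{R}_v$ and weights $P(g)$ are seed-independent is exactly the point that makes the term-by-term combination valid (the non-negativity of each difference is true by monotonicity but is not needed for the identity).
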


\begin{proof}
\[
\begin{array}{l}
\ExpEff(u | S) = \left[ \sigatt ( S \cup \{u\}) - \sigma (S \cup \{u\})\right] - \left[ \sigatt (S) - \sigma (S)\right]
\\[1em]
= \displaystyle \sum_{v \in V} \expectation \left[ Att_v(S \cup \{u\})\right.\\[-1em]
\left.\ \ \ \ \ \ \ \ \ \ \ \ \ \ \  - Inf_v(S \cup \{u\}) - ( Att_v(S) - Inf_v(S) )\right] 
\\[1em]
= \displaystyle \sum_{v \in V}  \displaystyle \sum_{g \sim G} P(g) \\
\ \ \ \ \ \ \ \ \ \ \ \ \ \times \left[ max \{T_g^{S \cup \{u\}}(v) - 1, 0 \} -  max \{T_g^S(v) - 1, 0 \} \right]
\\[1em]
= \displaystyle \sum_{v \in V}  \displaystyle \sum_{g \sim G} P(g) \\
\ \ \ \ \ \ \ \ \ \ \ \ \ \times \left[ max \{F_g^{S \cup \{u\}}(v) - 1, 0 \} -  max \{F_g^S(v) - 1, 0 \} \right]
\\[1em]
= \displaystyle \sum_{v \in V}  \displaystyle \sum_{g^T \in \mathcal{R}_v} P(g^T) \\
\ \ \ \ \ \ \ \ \ \ \ \ \ \times \left[ max \{F_g^{S \cup \{u\}}(v) - 1, 0 \} -  max \{F_g^S(v) - 1, 0 \} \right]
\end{array}
\]
\end{proof}

Using the above two theorems, we can prove that
Algorithm~\ref{algo:bestSEffectivAttitude} is an approximation
algorithm for the actionable attitude maximization problem. Let $S^*$
be an optimal solution and let $S_k$ be the set produced by
Algorithm~\ref{algo:bestSEffectivAttitude}.

\begin{theorem}
In algorithm~\ref{algo:bestSEffectivAttitude} if $a$ is $O(1/\epsilon^2 \log n/\delta)$, then 
\[\Pr[\ExpEff(S_k) \geq (1-1/e-\epsilon)\ExpEff(S^*) - (k-1)\Delta] \geq \delta\]
\end{theorem}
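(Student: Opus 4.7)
The plan is to combine concentration of the RR-based estimator with the approximate-submodularity analysis used in the preceding theorem. Let $\widehat{\ExpEff}(S)$ denote the empirical estimator computed from the samples generated in Algorithm~\ref{algo:bestSEffectivAttitude}. By Theorem~\ref{thm:actatt1}, $\widehat{\ExpEff}(S)$ is an unbiased estimator of $\ExpEff(S)$ for every fixed $S$, and by Theorem~\ref{thm:actatt2} the running value $c(u)$ used inside the greedy loop (after the bookkeeping that removes each already-chosen seed and its edges from every $g^T$) is an unbiased estimator of the true marginal $\ExpEff(u\mid S)$. Hence the greedy loop is exactly greedy with respect to $\widehat{\ExpEff}$, and the algorithmic task reduces to (i) showing $\widehat{\ExpEff}$ is uniformly close to $\ExpEff$ on all $k$-sets and (ii) invoking the preceding $(1-1/e)$-analysis for $\Delta$-approximate submodular functions on $\widehat{\ExpEff}$.

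For (i), I would apply a Chernoff/Hoeffding bound to $\widehat{\ExpEff}(S)$. For a fixed $v$, the per-sample random variable $\max\{F_{g^T}^S(v)-1,0\}$ is bounded by $\mathrm{Indegree}(v)$, and the algorithm draws $a\cdot \mathrm{Indegree}(v)$ independent RR graphs from $v$. Rescaling each per-$v$ summand to $[0,1]$ and summing across vertices (the samples drawn at different $v$ are mutually independent), a single Chernoff bound on the aggregate yields a failure probability that is exponentially small in $a$. Choosing $a = \Theta\!\bigl(\epsilon^{-2}\log(n^{k}/\delta)\bigr)=O\!\bigl(\epsilon^{-2}\log(n/\delta)\bigr)$ and taking a union bound over the $\binom{n}{k}\le n^{k}$ candidate seed sets of size $k$ gives
\[
\Pr\!\left[\,\forall\, S,\ |S|=k:\ \bigl|\widehat{\ExpEff}(S)-\ExpEff(S)\bigr|\le \tfrac{\epsilon}{2}\,\ExpEff(S^*)\right]\ \ge\ 1-\delta.
\]

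For (ii), I would condition on the above event and rerun the greedy recurrence from the previous theorem with $\widehat{\ExpEff}$ in place of $\ExpEff$. Since the two functions differ by at most $\tfrac{\epsilon}{2}\ExpEff(S^*)$ on every $k$-set and since $\ExpEff$ is $\Delta$-approximate submodular, the same telescoping argument as in the previous theorem yields $\widehat{\ExpEff}(S_k)\ge (1-1/e)\widehat{\ExpEff}(S^*)-(k-1)\Delta$. Invoking the uniform closeness once to bound $\widehat{\ExpEff}(S^*)$ from below by $\ExpEff(S^*)-\tfrac{\epsilon}{2}\ExpEff(S^*)$ and once to bound $\ExpEff(S_k)$ from below by $\widehat{\ExpEff}(S_k)-\tfrac{\epsilon}{2}\ExpEff(S^*)$, and collecting terms, produces $\ExpEff(S_k)\ge (1-1/e-\epsilon)\ExpEff(S^*)-(k-1)\Delta$ on the same high-probability event.

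The hard part will be the concentration step. The per-sample random variables have very different magnitudes across vertices (up to $\mathrm{Indegree}(v)$), so a naive one-shot Chernoff bound would give a bound that scales with the maximum degree. The key observation that makes the stated sample size $a=O(\epsilon^{-2}\log(n/\delta))$ sufficient is that the sample budget at $v$ is itself $a\cdot\mathrm{Indegree}(v)$; after per-vertex renormalization the effective variance cancels the degree factor, so $a$ need only absorb the $\log\binom{n}{k}$ term from the union bound. A secondary subtlety is confirming that the in-loop update in Algorithm~\ref{algo:bestSEffectivAttitude} (deleting each newly chosen $v^*$ and its edges from every sampled $g^T$ before recomputing $c_{g^T}^v(u)$) faithfully realizes the marginal expression in Theorem~\ref{thm:actatt2}, so that the empirical greedy really chases the marginal of $\widehat{\ExpEff}$ rather than some different surrogate.
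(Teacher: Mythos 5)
The paper in fact omits the proof of this theorem entirely, saying only that it follows from Theorems~\ref{thm:actatt1} and~\ref{thm:actatt2} together with the RR-set techniques used for attitude maximization; your two-stage plan (uniform concentration of the empirical estimator over $k$-sets, then the $\Delta$-approximate-submodular greedy recurrence) is exactly the route the authors gesture at, and your silent correction of the statement's ``$\geq\delta$'' to ``$\geq 1-\delta$'' is the right reading. There are, however, two concrete gaps. The first is in the concentration step. After rescaling each per-sample summand at $v$ by $1/\mathrm{Indegree}(v)$ so that it lies in $[0,1]$, the aggregate $W=a\,\widehat{\ExpEff}(S)$ is a sum of independent $[0,1]$-valued variables with mean $a\,\ExpEff(S)$, and the multiplicative Chernoff bound gives failure probability $2\exp\bigl(-\tfrac{\lambda^2}{2+\lambda}\,a\,\ExpEff(S)\bigr)$; taking $\lambda=\epsilon\,\ExpEff(S^*)/(2\,\ExpEff(S))$ the exponent is $\Theta\bigl(a\,\epsilon^2\,\ExpEff(S^*)\bigr)$. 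So the degree factor does cancel, as you say, but a $1/\ExpEff(S^*)$ factor appears in its place: one needs $a=\Omega\bigl(\tfrac{1}{\epsilon^2\,\ExpEff(S^*)}(\log{n\choose k}+\log\tfrac{1}{\delta})\bigr)$, precisely mirroring the $\beta\in\theta\bigl(\tfrac{|E|(8+2\epsilon)}{\epsilon^2\,\sigatt(S^*)}\log{n\choose k}\bigr)$ requirement in the attitude-maximization proof. Your claim that ``$a$ need only absorb the $\log{n\choose k}$ term'' (and the further collapse of $\log(n^k/\delta)$ to $\log(n/\delta)$, which silently assumes $k=O(1)$) is therefore not justified; since $\ExpEff(S^*)$ can be arbitrarily small (it is $0$ on a forest), the stated sample size does not suffice without this factor.

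The second gap is that the telescoping recurrence must be run on $\widehat{\ExpEff}$, because that is the function the algorithm actually greedily maximizes: the inequality $\sum_{e\in S^*\setminus S_i}\widehat{\ExpEff}(e\mid S_i)\le k\bigl(\widehat{\ExpEff}(S_{i+1})-\widehat{\ExpEff}(S_i)\bigr)$ holds only for the empirical marginals, and the chain of inequalities beginning with $\widehat{\ExpEff}(S^*)\le\widehat{\ExpEff}(S_i\cup S^*)$ requires monotonicity and $\hat\Delta$-approximate submodularity of $\widehat{\ExpEff}$ itself. You instead invoke the approximate submodularity of the true $\ExpEff$, which does not control the trajectory of the empirical greedy. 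This is repairable: the paper's case analysis establishes $f(v\mid T)-f(v\mid S)\le \deg_g(v)$ for every realization $g$, hence the same bound holds on the sampled graphs and $\widehat{\ExpEff}$ is approximately submodular with $\hat\Delta$ equal to the empirical mean of $\deg_g(v)$, which must then be related to $\Delta$ by a further concentration argument (or absorbed into the error term). Without one of these steps the appeal to ``the same telescoping argument'' does not go through.
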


We can prove the above theorem using Theorems~\ref{thm:actatt1}
and~\ref{thm:actatt2} and techniques used to establish the guarantee
on RR set based algorithm for the attitude maximization problem. We
omit the details. Note that in this algorithm, as opposed to the attitude maximization algorithm, RR {\em graphs} need to be stored as opposed to RR {\em sets}.
This leads to high memory usage and also since processing RR graphs is more expensive than processing RR sets, this algorithm is not as scalable as one would like to be.

\section{Experimental Evaluation}
\label{sec:expt}


\begin{wraptable}{l}{0.3\textwidth}
\begin{center}
\scriptsize
\begin{tabular}{|l||r|r|}
\hline 
\textbf{Network-name} & \textbf{\# Nodes} & \textbf{\# Edges} \\ \hline\hline
ego-Facebook & 4039 & 88234 \\ \hline
NetHept & 15229  & 62752 \\ \hline
Epinions & 75888  & 508837 \\ \hline
Amazon  & 334863  & 925872 \\ \hline
DBLP  & 317080  & 1049866 \\ \hline
Youtube & 1134890  & 2987624  \\ \hline
\hline
\end{tabular}
\end{center}
\caption{Datasets}
\label{tbl:datasets}
\vspace{-2em}
\end{wraptable}
Table ~\ref{tbl:datasets} lists the networks used.  The first six
networks are publicly available\footnote{Datasets are obtained from
  \url{http://snap.stanford.edu/data/} and
  \url{https://microsoft.com/en-us/research/people/weic/}. The code is
  available at \url{
    https://github.com/madhavanrp/QuantifyingAttitude}}.


\noindent
\textbf{Experimental Settings.\ } All the algorithms are implemented
in C++ and run on Linux server with AMD Opteron 6320 CPU (8 cores and
2.8 GHz) and 128GB main memory. To estimate the total attitude using
Algorithm ~\ref{algo:totalattitude}, we set
$\epsilon=0.1,\delta=0.001$. As pointed out in ~\cite{Arora:2017}, algorithms that use reverse sampling run into high memory usage owing to the number of samples generated. To find the \attup Maximizing seed set, we use the ideas from the
Stop-and-Stare algorithm ~\cite{NTD16,Huang:2017} that was developed
for the influence maximization problem. This ensures that we generate
(approximately) correct number of RR sets resulting in lesser memory used. It can be proved that this
implementation has the same theoretical guarantees as Algorithm 2.
The source code can be found at \url{https://github.com/madhavanrp/QuantifyingAttitude}.

\newcommand{\budgetVPlot}[2]{

\begin{tikzpicture}
    \begin{axis}[xlabel=\textbf{Budget}, title=\textbf{#2}, legend columns=-1, legend style={at={(0.5,-0.1)},anchor=north,draw=none}]
        \pgfplotstableread[col sep=comma]{datasets/#1}
        \datatablebudget
        \addplot table[y = facebook] from \datatablebudget ;
        \addlegendentry{Facebook}
        
        \addplot table[y = nethept] from \datatablebudget ;
        \addlegendentry{Nethept}
        
        \addplot table[y = amazon] from \datatablebudget ;
        \addlegendentry{Amazon}
        
        \addplot table[y = dblp] from \datatablebudget ;
        \addlegendentry{DBLP}
        
    \end{axis}
\end{tikzpicture}
}

\newcommand{\budgetVPlotYoutube}[2]{

\begin{tikzpicture}
    \begin{axis}[xlabel=\textbf{Budget}, title=\textbf{#2}, legend columns=-1, legend style={at={(0.5,-0.1)},anchor=north,draw=none}]
        \pgfplotstableread[col sep=comma]{datasets/#1}
        \datatablebudget
        
        \addplot table[y = youtube] from \datatablebudget ;
        \addlegendentry{Youtube}
        
    \end{axis}
\end{tikzpicture}}

\begin{figure*}[htp]
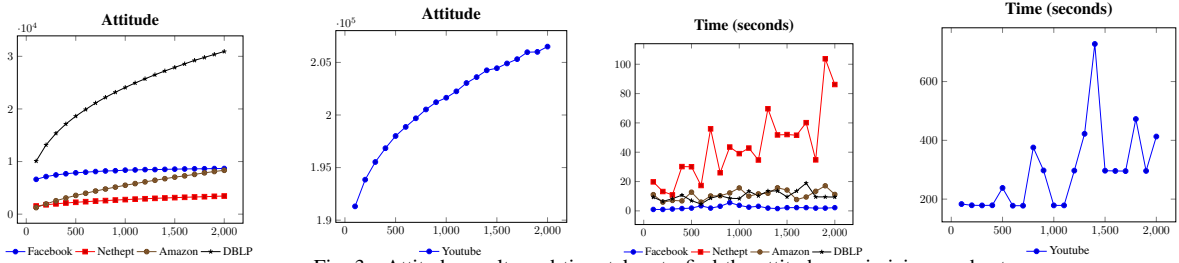

\begin{tabular}{cccc}
\resizebox {0.2\linewidth} {!} {
\budgetVPlot{budgetVsAtt.csv}{\Large Attitude}
}
&
\resizebox {0.2\linewidth} {!} {
\budgetVPlotYoutube{budgetVsAtt.csv}{\Large Attitude}
}
&
\resizebox {0.2\linewidth} {!} {
\budgetVPlot{budgetVsTime.csv}{\Large Time (seconds)}
}
&
\resizebox {0.2\linewidth} {!} {
\budgetVPlotYoutube{budgetVsTime.csv}{\Large Time (seconds)}
}
\end{tabular}
\vspace{-1em}
\caption{\attup results and time taken to find the \att maximizing seed set}
\label{fig:BudgetVs}
\end{figure*}

\smallskip
\noindent
\textbf{Maximizing \attup.\ }
\label{sec:att-inf}
The results are shown in Figure~\ref{fig:BudgetVs} (x-axis represents the
seed set size and the y-axis indicates the attitude or time). The \att
results produced across a wide range of graph sizes demonstrate the
scalability of $RAS$-based maximization.  We computed the attitude
maximization seed set for budgets in the range $[1, 2000]$. As
expected as seed set size increases, the total attitude also
increases. Note that for small networks, the total attitude does not
increase much after certain point. This is due to the submodularity of
the attitude function. After some point, the gain in attitude becomes
minimal.  The time taken to compute the seed set does not increase
much as the seed set size increases. For example, on $DBLP$
($n=317080, m=1049866$), the time taken is less than 20 seconds for
budgets ranging from $100-2000$.  This is due to the fact that as the
seed set size increases, the value of $\sigma(S^*)$ would increase
thus resulting in smaller RR sets (as per the stop-and-stare algorithm).

\graphicspath{ {./images/} }
\begin{figure}\label{ChangeProb}
\begin{centering}
\begin{tabular}{cc}
\resizebox {0.45\linewidth} {0.35\linewidth} {
	\begin{tikzpicture}
	\tikzstyle{every node}=[font=\fontsize{8}{8}\selectfont]
	\begin{axis}[
	ybar,
	enlargelimits=0.15,
	legend style={at={(0.5,-0.2)},
		anchor=north,legend columns=-1},
	ylabel={\attup},
	symbolic x coords={0.02, 0.05, 0.1, 1/indegree},
	xtick=data
	]
	\addplot 
	coordinates {(0.02,182.6666667) (0.05,508.6666667) (0.1,1241) (1/indegree,7425.333333)};
	\addplot 
	coordinates {(0.02,487) (0.05,2194.666667) (0.1,10057) (1/indegree,59773.66667)};
	\legend{Amazon, DBLP}
	\end{axis}
	\end{tikzpicture}
}
&
\resizebox {0.45\linewidth} {0.35\linewidth} {
	\begin{tikzpicture}
	\tikzstyle{every node}=[font=\fontsize{8}{8}\selectfont]
	\begin{axis}[
	ybar,
	enlargelimits=0.15,
	legend style={at={(0.5,-0.2)},
		anchor=north,legend columns=-1},
	ylabel={Time (seconds)},
	symbolic x coords={0.02, 0.05, 0.1, 1/indegree},
	xtick=data
	]
	\addplot 
	coordinates {(0.02,40.05666667) (0.05,22.15666667) (0.1,8.36) (1/indegree,4.713333333)};
	\addplot 
	coordinates {(0.02,20.75666667) (0.05,7.87) (0.1,2.483333333) (1/indegree,1.063333333)};
	\legend{Amazon, DBLP}
	\end{axis}
	\end{tikzpicture} 
}
\end{tabular}
\end{centering}
\vspace{-1em}
\caption{Varying probability with $k=100$}
\label{fig:varyingprob}
\end{figure}
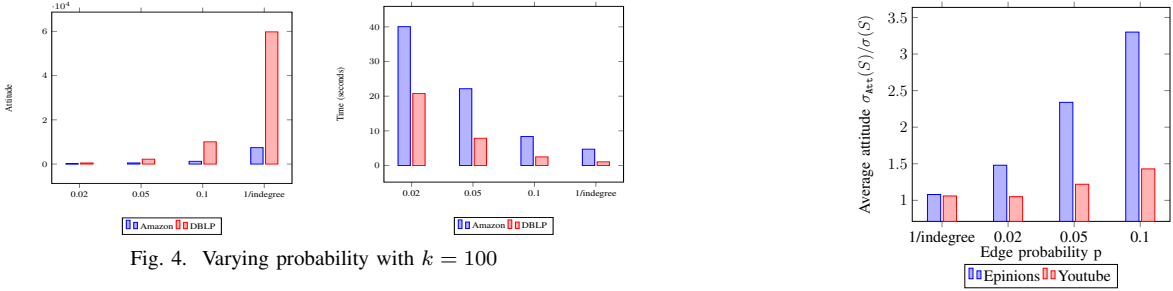

\smallskip
\noindent\textbf{ Propagation Probability and \attup.\ } We consider
different edge probabilities such as $0.02, 0.05, 0.1$ and
$1/\mbox{inDegree}$. The overall \att increases as the probability
increases (See Figure~\ref{fig:varyingprob}). Interestingly, the
maximum \att is observed when the probability is
$1/\mbox{inDegree}$. This is explained by considering the fact that
for each node, it is expected that one of its incoming edges is
activated (if its neighbors are activated). Therefore, the overall
\att is significantly higher if $1/\mbox{inDegree}$ is greater than
$0.1$, on average. We also report how time varies with probability.
We observe that the time taken is least when the edge probability is
$1/\mbox{inDegree}$ and is highest when the probability is
$0.02$. This is again explained by observing that $\sigatt(S^*)$ inversely
impacts the number of RR sets required for estimating \att.  We observe
that this is consistent with the time taken to compute the best seed
with propagation probabilities that produce relatively smaller overall
\att.

\smallskip
\noindent
\textbf{Average Attitude.\ }
Next, we focus on the {\em average attitude of a node}.  There are two
ways to look at this number. The first is the ratio
$\sigatt(S)/\sigma(S)$ which is the ratio of expected attitude and
expected number of influenced nodes. Another measure for average
attitude is to take the expectation of the following ratio: Total
Attitude/Number of nodes influenced. These two quantities need not be
equal, in general, as expectation of a ratio is not the ratio of
expectations. We computed the former quantity by running the presented
algorithms. 
We estimated the latter quantity by running simulations (20000).  The
results are shown in
Table~\ref{tb2:averageAttForAllGraphs}. 
\begin{table}[t]
\scriptsize
	\centering
\begin{tabular}{ | m{0.088\textwidth} | m{0.09\textwidth}| m{0.07\textwidth} | m{0.108\textwidth} | }
\hline
graph name & $\frac{\sigatt(S)}{\sigma(S)}$ & $E[\frac{\texttt{Att}}{\texttt{Inf}}]$ & 
Average indegree \\
			\hline
			ego-Facebook  & 3.21 & 3.20 & 21.85 \\
			\hline
			Epinions  & 3.30 & 3.32 & 6.71 \\
			\hline
			NetHept  & 1.34 & 1.38 & 4.12\\
			\hline
			DBLP  & 1.23 & 1.23 & 3.31 \\
			\hline
			Youtube  & 1.43 & 1.44 & 2.63 \\
			\hline
		\end{tabular}
		 budget = 100 and edge probability = 0.1
	\caption{Average Attitude }
		\label{tb2:averageAttForAllGraphs}	
\vspace{-1em}
\end{table}
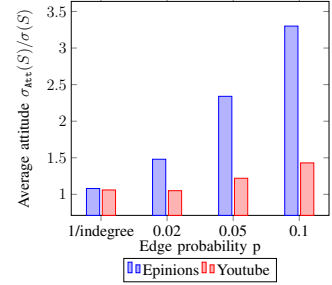
\begin{figure}[h]
\begin{center}
	\begin{tikzpicture}[scale=0.5]
	\tikzstyle{every node}=[font=\fontsize{12}{12}\selectfont]
	\begin{axis}[
	ybar,
	enlargelimits=0.15,
	legend style={at={(0.5,-0.2)},
		anchor=north,legend columns=-1},
	ylabel={Average attitude $\sigatt(S) / \sigma(S)$},
	xlabel = {Edge probability p},
	symbolic x coords={1/indegree, 0.02, 0.05, 0.1},
	xtick=data
	]
	\addplot 
	coordinates {(1/indegree,1.08) (0.02,1.48) (0.05,2.34) (0.1,3.3)};
	\addplot 
	coordinates {(1/indegree,1.06) (0.02,1.05) (0.05,1.22) (0.1,1.43)};
	\legend{Epinions, Youtube}
	\end{axis}
	\end{tikzpicture}
\end{center}
\caption{Average attitude trends as edge probability p increases(k = 100)}
\label{fig:averageAttitudeTrend}
\end{figure}
Interestingly both the
quantities turn out be almost the same for all the graphs. For all the
graphs listed, the average attitudes calculated as $\sigatt(S) /
\sigma(S)$ are greater than 1 as expected since every influenced node
has \att greater than or equal to 1, and they match very well with the
results from the diffusion. 
Graphs with higher average indegrees tend to achieve higher
average attitudes. For example, Epinions achieves a higher average \att
than NetHept.
With increasing edge probabilities, the average attitude
increases(Fig.~\ref{fig:averageAttitudeTrend}) because with
higher edge probabilities, nodes are more likely to be
activated; and with more activated neighbors, a node tends to be
influenced multiple times.

\newcommand{\budgetVEffAttPlot}[2]{
\begin{tikzpicture}
    \begin{axis}[xlabel=\textbf{Budget}, title=\textbf{#2}, legend columns=-1, legend style={at={(0.5,-0.1)},anchor=north,draw=none}]
        \pgfplotstableread[col sep=comma]{datasets/#1}
        \datatablebudget
        \addplot table[y = facebook] from \datatablebudget ;
        \addlegendentry{\Large Facebook}
        
        \addplot table[y = nethept] from \datatablebudget ;
        \addlegendentry{\Large Nethept}
        
        \addplot table[y = dblp] from \datatablebudget ;
        \addlegendentry{\Large DBLP}


    \end{axis}
\end{tikzpicture}
}

\smallskip
\noindent
\textbf{Maximizing \Eff.\ } We implement Algorithm
~\ref{algo:bestSEffectivAttitude} to find the seed set that maximizes
the \Eff. For each $v\in V$, we generate $O(Indegree(v)/\epsilon^2)$
RR graphs where $\epsilon=0.1$. Figure ~\ref{fig:budgetVsEffAtt}
examines the \Eff while varying the budget. We fix the probability to
0.05. As expected, the \Eff does increase when the seed set size is
increased. We observe that the \Eff grows in larger quantities for
$Facebook$ than for the other graphs. This is due to the fact that
$Facebook$ is denser, leading to a higher number of edges activated by
the seed set. We also study how the \attup Maximizing seed compares
with the \Eff Maximizing seed.
%
\begin{wraptable}{l}{0.25\textwidth}
\scriptsize
	\centering
		\begin{tabular}{ | m{0.088\textwidth} | l | l| }
			\hline
			Graph & Alg. ~\ref{algo:greedyapprox} & Alg. ~\ref{algo:bestSEffectivAttitude} \\
			\hline
			ego-Facebook  & 2.11 & 2.69  \\
			\hline
			NetHept  & 1.24 & 1.34\\
            \hline
            Amazon  & 1.01 & 1.03\\
			\hline
			DBLP  & 1.18 & 2.32 \\
			\hline
		\end{tabular}
		 \caption{$E[Att / Inf]$ values for $k=100, p= 0.05$}
	\label{tb3:averageAttComparison}		
\vspace{-1em}
\end{wraptable}
Across various graphs, we note that the \Eff Maximizing seed set
activates fewer nodes when compared to the \attup Maximizing seed. For
example, on $DBLP$ with $k=100,p=0.05$, Attitude maximization
algorithm produces \attup of $2294$ with influence $1930$. In the same
setting, the actionable attitude maximization algorithm produces \attup of
$870$ with influence $376$.  We note two points. The objective
function $\ExpEff(.)$ is higher for the seed set produced by the
actionable attitude maximization compared to the seed set produced by
the attitude maximization problem. Very interestingly, for the
attitude maximization seed set the average attitude is $2294/1930$
which is $1.19$ whereas the actionable attitude maximization seed
results in an average attitude of $870/376$ which is $2.31$.  Recall
that the notion of actionable attitude attempts to maximize entities
that are strongly influenced and thus should result in higher average
attitude and the experiments concur with this intuition. Table
~\ref{tb3:averageAttComparison} compares average attitude for the seed
sets produced by the attitude maximization and actionable attitude
maximization algorithms. The Average Attitude tends to be
  higher when the \Eff is maximized with $Amazon$ being an outlier.

\begin{figure}[h]

\begin{center}
\resizebox {0.2\textwidth} {!} {
\budgetVEffAttPlot{budgetVsAttMinusInfluence.csv}{\Large \Eff}

}
\end{center}
\caption{Budget Vs \Eff, $p=0.05$}
\label{fig:budgetVsEffAtt} 
\end{figure}
These observations suggest that \Eff maximization produces fewer overall nodes activated but with higher individual \attup. As with maximizing \attup, we compared our implementation with the same baseline heuristics observed higher \Eff. The experiments on $Youtube$ do not finish as the program runs out of memory. This is due to the fact that \Eff Maximizing requires the \emph{RR Graphs} to be stored rather than just vertices.



%

\smallskip
\noindent
\textbf{Attitude Distribution.\ } We consider distribution of nodes
with certain attitude values and their contribution to the total
attitude. 
For each attitude value $a$, we looked at the total contribution of
all nodes with attitude $a$ (obtained by multiplying number of nodes
with attitude $a$). The attitude values are on $x$-axis and the
attitude contribution on $y$-axis of
Figure~\ref{fig:averageHistograms}.  
\begin{wrapfigure}{l}{0.25\textwidth}
	\begin{tikzpicture}[scale=0.5]
	\tikzstyle{every node}=[font=\fontsize{10}{10}\selectfont]
	\begin{axis}[
	ybar, 
	bar width=5,
	enlargelimits=0.06, legend style={at={(0.5,-0.2)},
		anchor=north,legend columns=-1},
	ylabel={\Large Average attitude contribution},
	xlabel = {\Large Attitude},
	symbolic x coords={1, 2, 3, 4, 5, 6, 7, 8, 9, 10, 11, 12, 13, 14, 15, 16, 17,18, 19, 20, more},
	xtick=data,
	tick label style={font=\tiny}
	]
	\addplot 
	coordinates {(1, 5905.07) (2, 3038.62) (3,2308.026) (4,1940.66) (5,1689.95) (6,1496.262) (7, 1336.293)(8, 1201.6)
		(9, 1081.899) (10, 976.354) (11, 881.034) (12, 796.2036) (13, 720.1493) (14,656.1086) (15, 598.332) (16, 549.3696) (17, 507.2613)
		(18, 472.509) (19, 444.6817) (20,420.56) (more, 7836.0569)};
	
	\end{axis}
	\end{tikzpicture}
\vspace{-1em}
	\caption{Attitude contributions}
	\label{fig:averageHistograms}
\end{wrapfigure}
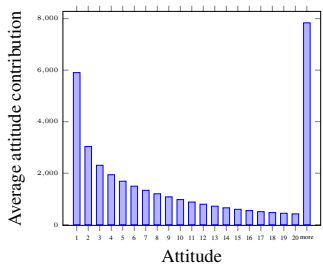
On Epinions graph (with budget
$100$ and edge probability $0.1$) the total expected attitude is
around $34000$ and the expected number of influenced nodes is around
$10,500$.  However, there are 233 nodes whose attitude is more than
$20$ (last bar in the figure). These nodes alone contribute $8,000$ to
the total attitude.  Thus $2\%$ of the influenced nodes contribute
nearly $23\%$ to the total attitude.  This means a relatively small
fraction of nodes with high \att contribute significantly to total
attitude and thus average attitude.

\begin{figure}
\begin{centering}
\begin{tabular}{cc}
\resizebox {0.4\linewidth} {0.35\linewidth} {
\includegraphics{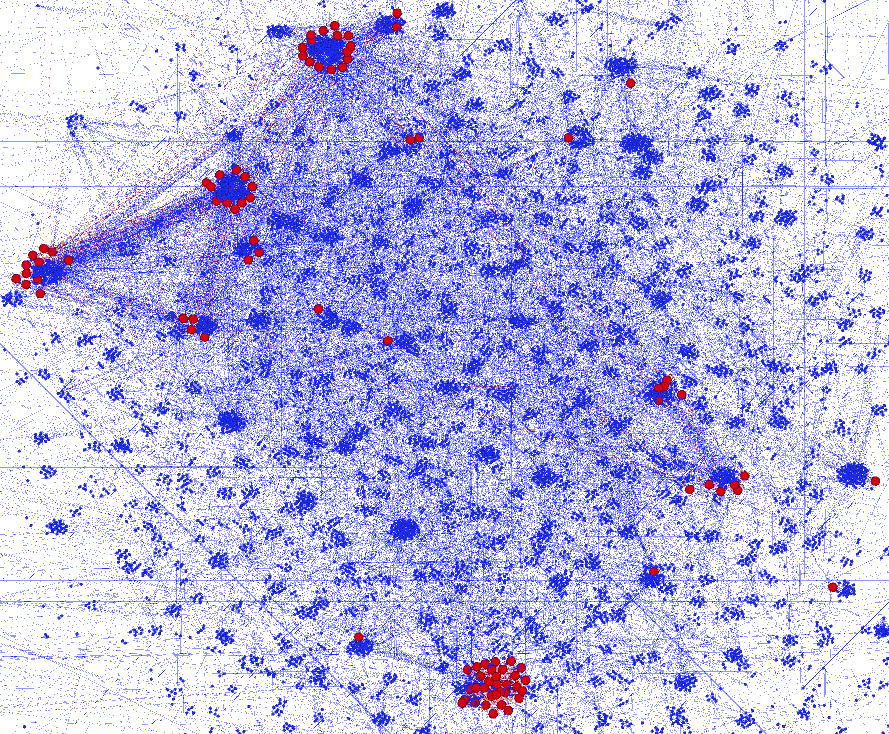}
}
&
\resizebox {0.4\linewidth} {0.35\linewidth} {
\includegraphics{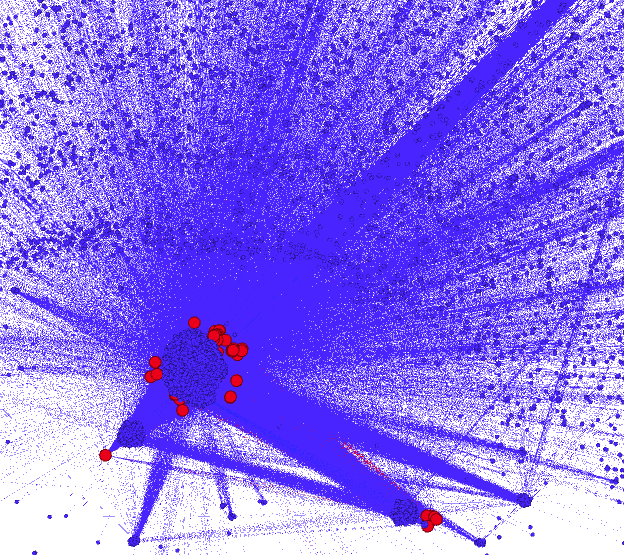}
}
\end{tabular}
\end{centering}
\caption{Clusters of  High \attup nodes}
\label{fig:spatial}
\end{figure}

\smallskip
\noindent
\textbf{Spatial Proximity of Nodes with High \att.\ } Finally we
visualized the location of nodes with high attitude values
(Figure~\ref{fig:spatial}).  Red nodes are the nodes with high
attitude. We used the clustering algorithm mentioned in
~\cite{Blondel_2008} to identify communities, and visualized them
using the OpenOrd algorithm ~\cite{openord} from Gephi
~\cite{ICWSM09154} which is used for visually distinguishing
clusters. For graph Epinions, a total of 708 communities were
identified. We we looked at the top 100 \att nodes, we noticed that
all these nodes were limited to only 5 of those
communities. Similarly, for graph CA-HepTh, 473 communities were
identified. The top 100 \att nodes were limited to 12 of them. This
behavior was observed in other graphs as well, which showed that high
\att nodes are generally restricted to a few communities rather than
being distributed across the network.


\section{Conclusion}
\label{sec:conc}

In this work we have formalized the notion of {\em strength of
  influence/attitude} in social networks and have formulated the
attitude maximization problem. We present various theoretical
properties related to our formulation. We also introduce the notion of
actionable attitude to capture high attitude nodes by defining this
quality as the (expected) difference between total attitude and the
number of influenced nodes.  There are several other ways to formulate
this notion---for example, by looking at the ratio of attitude and
influence or by examining the number of entities whose attitude value
is above a threshold.  Exploring these alternative formulations would be interesting.


\bibliographystyle{plain}
\bibliography{main}

\end{document}